\documentclass{article}

    \PassOptionsToPackage{numbers, sort}{natbib}
\usepackage[preprint]{neurips_2026}

\usepackage[utf8]{inputenc} 
\usepackage[T1]{fontenc}    
\usepackage{hyperref}       
\usepackage{url}            
\usepackage{booktabs}       
\usepackage{amsfonts}       
\usepackage{nicefrac}       
\usepackage{microtype}      

\usepackage{amsthm}        
\usepackage{amsmath}
\usepackage{graphicx} 
\usepackage{caption}  
\usepackage{float}    
\usepackage{subcaption} 
\usepackage{booktabs} 
\usepackage{enumitem}

\usepackage{algorithm}      
\usepackage{algpseudocode}  
\usepackage{algorithmicx}   

\usepackage{rotating}
\usepackage{tabularx}
\usepackage{booktabs}
\usepackage{adjustbox}
\usepackage{multirow}
\usepackage{dsfont}

\usepackage[T1]{fontenc}
\usepackage[utf8]{inputenc}
\usepackage{amsmath,amssymb}
\usepackage{algorithm}
\usepackage{algpseudocode}
\usepackage{xspace}

\newcommand{\ours}{\textsc{DecentMem}\xspace}

\newcommand{\tabref}[1]{Table~\ref{#1}}
\newcommand{\eqnref}[1]{\text{Eq.}~(\ref{#1})}
\newcommand{\appref}[1]{Appendix~\ref{#1}}
\newcommand{\figref}[1]{Fig.~\ref{#1}}
\newcommand{\secref}[1]{\S\ref{#1}}

\usepackage{booktabs}   
\usepackage{multirow}   
\usepackage{graphicx}   
\usepackage[table,xcdraw]{xcolor} 
\usepackage{amsmath}    

\newtheorem{theorem}{Theorem}

\newtheorem{corollary}{Corollary}
\newtheorem{Assumption}{Assumption}
\newtheorem{remark}{Remark}

\definecolor{upcolor}{HTML}{F28B71}    
\definecolor{downcolor}{HTML}{48C1B6}  
\definecolor{bestbg}{HTML}{DEE6F0}     
\definecolor{tagbg}{HTML}{B4B4B4}      

\newcommand{\best}[1]{\cellcolor{bestbg}#1}

\usepackage{caption}
\usepackage{wrapfig}
\usepackage[most]{tcolorbox}

\hypersetup{
    colorlinks=true,
    linkcolor=red,
    citecolor=cyan,
    filecolor=magenta,      
    urlcolor=magenta,
}

\definecolor{caseTitleBlue}{RGB}{177,221,240}
\definecolor{caseSectionBlue}{RGB}{218,232,252}
\definecolor{caseBorder}{RGB}{120,145,160}
\definecolor{caseGreen}{RGB}{0,190,0}

\definecolor{gmemTitleBlue}{RGB}{190,238,240}
\definecolor{gmemSectionBlue}{RGB}{218,232,252}
\definecolor{gmemBorder}{RGB}{110,130,145}
\definecolor{gmemRed}{RGB}{255,0,0}

\newtcolorbox{gmemorycasebox}[1]{
    enhanced,
    breakable,
    colback=white,
    colframe=gmemBorder,
    colbacktitle=gmemTitle,
    coltitle=black,
    fonttitle=\bfseries\large,
    title={#1},
    boxrule=0.8pt,
    arc=2pt,
    left=6pt,
    right=6pt,
    top=6pt,
    bottom=6pt,
    toptitle=6pt,
    bottomtitle=6pt,
    lefttitle=8pt,
    righttitle=8pt,
    before skip=10pt,
    after skip=12pt,
    borderline west={2pt}{0pt}{gmemBorder}
}

\definecolor{decentTitle}{HTML}{DDF4EC}
\definecolor{decentBar}{HTML}{EEF9F4}
\definecolor{decentBorder}{HTML}{3A8F73}
\definecolor{decentLogBg}{HTML}{F6FCF9}
\definecolor{decentLogText}{HTML}{1E7A5B}

\definecolor{gmemTitle}{HTML}{FCE8E8}
\definecolor{gmemBar}{HTML}{FFF3F3}
\definecolor{gmemBorder}{HTML}{B85C5C}
\definecolor{gmemLogBg}{HTML}{FFF8F8}
\definecolor{gmemLogText}{HTML}{A33A3A}

\definecolor{caseText}{HTML}{222222}
\definecolor{caseMuted}{HTML}{666666}

\newenvironment{gmemcontent}{%
    \par
    \begingroup
    \small
    \setlength{\parindent}{0pt}%
    \setlength{\parskip}{3pt}%
    \leftskip=5pt
    \rightskip=5pt
    \vspace{4pt}%
}{%
    \par
    \vspace{4pt}%
    \endgroup
}

\usepackage{fvextra}

\usepackage{lineno}
\usepackage{titlesec}
\titlespacing\section{0pt}{0pt}{0pt}
\titlespacing\subsection{0pt}{0pt}{0pt}
\titlespacing\subsubsection{0pt}{0pt}{0pt}
\titlespacing\paragraph{0pt}{0pt}{4pt}

\title{Self-Evolving Multi-Agent Systems via\\Decentralized Memory}

\author{
Guangya Hao\textsuperscript{1}
\qquad
Yunbo Long\textsuperscript{1}
\qquad
Zhuokai Zhao\textsuperscript{2}\\[0.3em]
\textsuperscript{1}University of Cambridge
\quad
\textsuperscript{2}University of Chicago
}

\begin{document}

\maketitle

\begingroup
\renewcommand{\thefootnote}{}
\footnotetext{
\begin{tabular}{@{}l@{}}
Correspondence: \texttt{gh540@cam.ac.uk} and \texttt{zhuokai@uchicago.edu}
\end{tabular}
}
\endgroup

\begin{abstract}
Self-evolving multi-agent systems (MAS) have emerged as a promising route to LLM agents that continually improve from experience, with persistent memory at their foundation.
However, existing designs almost exclusively adopt a \textit{centralized} repository shared across agents, incurring communication and coordination overhead, raising privacy concerns, and collapsing agent diversity.
We propose \textbf{\ours}, a \textit{decentralized} memory framework in which each agent maintains its own dual-pool memory --- an exploitation pool of consolidated past trajectories and an exploration pool of LLM-generated candidates for unseen contexts.
The two pools are reweighted online based on stage-wise feedback  from an LLM-as-a-judge.
Theoretically, we prove that this design guarantees global reachability of the solution space and achieves $O(\log T)$ cumulative regret, matching the stochastic bandit lower bound up to constants.
In practice, across three MAS frameworks (AutoGen, DyLAN, AgentNet), three Qwen3 backbones (4B/8B/14B), two Gemma4 backbones (E2B/E4B)  and five benchmarks spanning math, code, QA, and embodied tasks, \ours improves average accuracy by up to 23.8\% over the strongest centralized memory baseline and by up to 52.5\% over the no-memory baseline, while reducing token usage by up to 49\%.

\end{abstract}

\section{Introduction}\label{sec:introduction}
\begin{wrapfigure}{r}{0.51\textwidth}
\vspace{-0.3in}
    \centering
    \includegraphics[width=\linewidth]{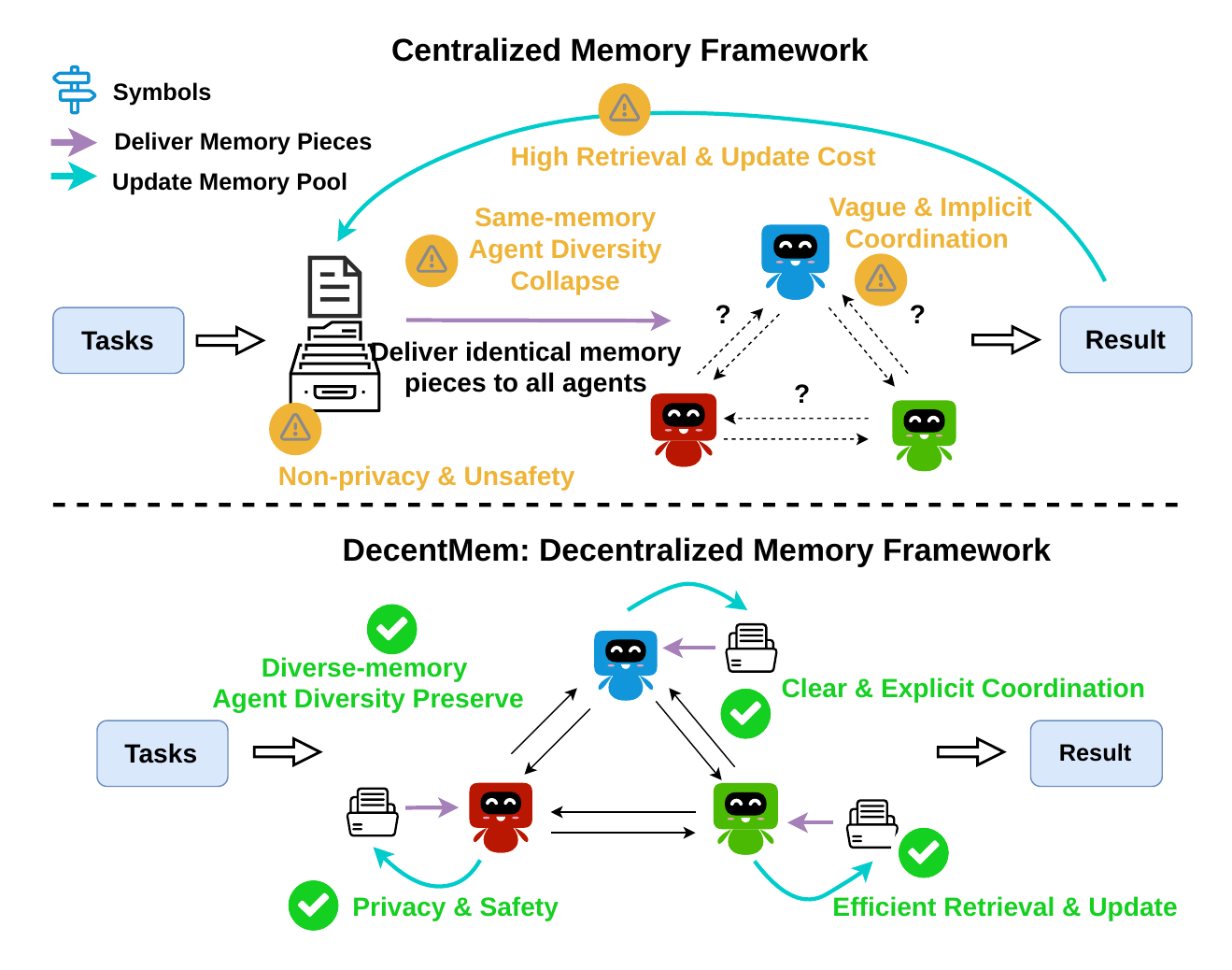}
    \caption{
        Compared to traditional centralized memory frameworks, \ours provides the first decentralized memory framework with adaptability to various MAS frameworks.
    }
    \label{fig:teaser}
    \vspace{-0.25in}
\end{wrapfigure}
Memory enables large language model (LLM) based multi-agent systems (MAS) to accumulate experience across tasks and improve over time.
The dominant design pattern in current MAS is a \textit{centralized} memory --- a single shared repository that every agent reads from and writes to --- inherited directly from single-agent memory architectures~\citep{zhong2024memorybank, packer2023memgpt, Chhikara2025Mem0BP}.
For example, MetaGPT's global message pool~\citep{Hong2023MetaGPTMP} and G-Memory's hierarchical shared graph~\citep{zhang2025g} both default to sharing everything across agents.
We argue, however, that centralization is the wrong default for multi-agent memory, and that \textit{decentralized} agent-private memory is both better motivated and empirically superior, as illustrated in \figref{fig:teaser}.

A key limitation of centralized MAS memory is that it is incompatible with self-evolution~\citep{Shinn2023ReflexionLA, Wang2023VoyagerAO, Chen2025ScalingAL}.
In multi-agent settings, self-evolution should preserve the role-complementary specialization that makes collaboration valuable in the first place, rather than drive the system toward a single shared strategy.
Centralized memory directly undermines this objective: when every agent retrieves from the same pool, their behaviors homogenize over time, and the system collapses toward a single dominant strategy regardless of how distinct the agents' initial roles were~\citep{Yu2026MultiAgentMF}.
The diversity that motivates instantiating multiple agents is gradually lost.
Centralization also imposes well-documented system-level costs, including high communication cost~\citep{hadfield2025multiagentresearch, hong2025context}, synchronization and coordination tax~\citep{cemri2025multi, Yu2026MultiAgentMF}, and privacy risks~\citep{Yang2025AgentNetDE}.

Motivated by these limitations, we propose \textbf{\ours}, a decentralized self-evolving memory framework for LLM-based MAS.
In \ours, each agent maintains its own dual-pool memory, which consists of an \textit{exploitation pool (E-pool)} of consolidated trajectories from past tasks, and an \textit{exploration pool (X-pool)} of LLM-generated candidates for unseen contexts.
The two pools serve complementary functions, where the E-pool drives a similarity-based local walk that reuses prior successful strategies, while the X-pool injects fresh probability mass into regions of the strategy space the agent has never visited, breaking the local-optimum trap that traps pure exploitation.
An LLM-as-a-judge evaluates each stage of the solution trajectory and re-weights the two pools online, allowing each agent to learn its own exploitation--exploration balance from feedback rather than committing to a fixed schedule.
Decentralization preserves agent diversity because the consolidated experience never leaves the agent that earned it, and the dual-pool structure preserves the capacity to escape that experience when it is not useful in the current task or scenario.

We evaluate \ours both theoretically and empirically.
Theoretically, we model multi-agent self-evolving search as a graph-structured random walk with heuristic teleportation and cast online routing between the two memory pools as a stochastic bandit problem. Under this formulation, we prove that \ours achieves global reachability over each agent's local solution subspace and attains $\mathcal{O}(\log T)$ cumulative regret, order-optimal against the $\Omega(\log T)$ bandit lower bound~\citep{Auer2002FinitetimeAO}.
Empirically, across three MAS frameworks (AutoGen~\citep{wu2024autogen}, DyLAN~\citep{liu2023dynamic}, AgentNet~\citep{Yang2025AgentNetDE}), five LLM backbones spanning dense and mixture-of-experts (MoE) architectures (Qwen3-4B/8B/14B~\citep{Yang2025Qwen3TR}, Gemma4-E2B/E4B~\citep{farabet2026gemma4}), and five benchmarks including mathematical reasoning, code generation, question answering, and embodied decision-making, we show that \ours improves average accuracy by up to 23.8\% over the strongest centralized memory baseline (G-Memory~\citep{zhang2025g}) and by up to 52.5\% over the no-memory baseline, while reducing token usage by up to 49\%.
Notably, the performance improvement widens as collaboration becomes more stochastic --- moving from AutoGen's pre-designed workflows~\citep{wu2024autogen} to AgentNet's opportunistic coordination~\citep{Yang2025AgentNetDE} --- indicating that decentralized memory preserves the divergent trajectories that centralized memory would otherwise collapse.

Our main contributions are summarized as follows:
\begin{itemize}[topsep=0pt,leftmargin=*,noitemsep]
    \item \textit{A decentralized dual-pool memory with online routing.} 
    We introduce \ours, the first decentralized memory framework for LLM-based MAS, in which each agent maintains a private exploitation pool of consolidated trajectories and a private exploration pool of LLM-generated candidates. 
    An online router, supervised by an LLM-as-a-judge, re-weights the two pools from stage-wise feedback so that each agent learns its own exploitation--exploration balance.

    \item \textit{Theoretical guarantees.} 
    We prove that \ours achieves global reachability over each agent's local solution subspace, ensuring no agent is permanently trapped in a locally suboptimal region. 
    We further prove that online pool routing attains $\mathcal{O}(\log T)$ cumulative regret, matching the $\Omega(\log T)$ bandit lower bound up to constants and converging to the optimal exploitation--exploration balance at an order-optimal rate.

    \item \textit{Comprehensive empirical validation.} 
    Across three MAS frameworks, five LLM backbones, and five benchmarks, we demonstrate that \ours outperforms the strongest centralized baselines by 8.6\% on average and uses up to 49\% fewer tokens, with the largest gains in the most stochastic collaboration regimes.

    %
\end{itemize}
\section{Related Work}\label{sec:related_work}

\paragraph{Single-agent memory.}
%
Memory is a central mechanism that enables LLM agents to accumulate experience and improve through repeated interaction with their environments.
Early systems mainly extended the effective context window by retrieving past interactions, as in MemoryBank~\citep{zhong2024memorybank} and MemGPT~\citep{packer2023memgpt}.
More recent work has moved beyond simple retrieval and introduced higher-level memory abstractions, including reflective memory formation in Generative Agents~\citep{park2023generative}, scalable personalized memory in Mem0~\citep{Chhikara2025Mem0BP}, dynamic memory organization in A-Mem~\citep{xu2025mem}, lightweight multi-stage memory management in LightMem~\citep{fang2025lightmem}, and efficient lifelong memory compression in SimpleMem~\citep{Liu2026SimpleMemEL}.
Despite these advances, existing memory designs are still largely developed for single-agent settings, where interaction histories are assumed to be linear and individually maintained.
As a result, they are not well suited to the multi-threaded, collaborative, and interdependent dynamics of multi-agent systems.

\paragraph{LLM-based multi-agent systems.}

%
The deployment of multiple LLM agents has enabled collaborative capabilities beyond single-agent systems.
Early frameworks, including AutoGen~\citep{wu2024autogen}, CAMEL~\citep{Li2023CAMELCA}, and AgentVerse~\citep{chen2023agentverse}, established multi-agent collaboration through predefined roles, fixed interaction patterns, or manually designed coordination topologies.
MetaGPT~\citep{Hong2023MetaGPTMP} and ChatDev~\citep{qian2024chatdev} further formalized role-based workflows, while DyLAN~\citep{liu2023dynamic} introduced inference-time agent selection and Agent Importance Score for adaptive collaboration.
Recent methods such as GPTSwarm~\citep{zhuge2024gptswarm}, AFlow~\citep{zhang2024aflow}, AgentNet~\citep{Yang2025AgentNetDE}, and Mixture-of-Minds~\citep{zhou2025mixture} further automate the design or optimization of coordination structures.
However, these approaches still focus primarily on task-level or benchmark-level coordination design, rather than persistent memory-driven improvement across tasks. 
As a result, most existing systems remain limited to static or one-shot adaptation and do not support genuine self-evolution through learning from past collaborations over time.

\paragraph{Memory in multi-agent systems.}
The intersection of memory and multi-agent systems remains relatively underexplored~\citep{Zhang2024ASO,Yu2026MultiAgentMF}.
Many existing MAS frameworks either omit dedicated memory components or incorporate only rudimentary within-trial memory, such as retaining the immediate conversation history~\citep{wu2024autogen, liu2023dynamic}. 
Existing cross-session memory designs often store condensed task-level artifacts rather than rich collaboration trajectories.
For example, ChatDev mainly preserves distilled task summaries or past solutions, while discarding most fine-grained inter-agent interaction details~\citep{qian2024chatdev}. 
MetaGPT employs a global shared message pool as a centralized communication and memory substrate~\citep{Hong2023MetaGPTMP}, and G-Memory further extends this line of work with a hierarchical graph-based shared memory that stores insights, queries, and interaction traces~\citep{zhang2025g}. 
However, such centralized memory repositories introduce synchronization and consistency overhead, access-control and privacy risks, and scalability bottlenecks~\citep{Rezazadeh2025CollaborativeMM, Yu2026MultiAgentMF}.
Moreover, exposing all agents to the same shared memory may homogenize agent behavior and weaken the emergence of specialized, role-complementary expertise~\citep{Yu2026MultiAgentMF, Yang2025AgentNetDE}.
These limitations motivate decentralized memory architectures in which each agent maintains its own private experience while selectively sharing useful knowledge~\citep{Rezazadeh2025CollaborativeMM, Yu2026MultiAgentMF, Yang2025AgentNetDE}.

\section{Preliminary}\label{sec:prelim}
We first introduce the notation and formalize the basic concepts of a multi-agent system.
Let \(\mathcal{X}=\{x_1,\dots,x_J\}\) denote the set of tasks, and let \(\mathcal{A}=\{a_1,\dots,a_M\}\) denote the set of collaborative agents for solving tasks in \(\mathcal{X}\). Each agent \(a_m \in \mathcal{A}\) is characterized by
\(
a_m = \bigl(\mathrm{Base}_m,\mathrm{Role}_m,\mathcal{M}_m,\mathrm{Tool}_m\bigr),
\)
where \(\mathrm{Base}_m\) denotes the underlying LLM, \(\mathrm{Role}_m\) denotes the functional role or persona of the agent, \(\mathcal{M}_m=\{z_1,z_2,\dots,z_j\}\) denotes its memory space, and \(\mathrm{Tool}_m\) denotes the set of tools accessible to the agent. 
Each memory piece \(z_i \in \mathcal{M}_m\) is represented as \(z_i = (\xi_i, r_i^\star),\) where \(\xi_i\) is the context prototype and \(r_i^\star\) is the associated action prototype.
Depending on the memory framework, \(\mathcal{M}_m\) may refer either to memory retrieved from a centralized repository or to the private memory space maintained by agent \(a_m\) in a decentralized setting.

Given an input task \(x \in \mathcal{X}\), the system evolves through a finite sequence of stages
\(
\mathcal{T}=\{t_1,\dots,t_N\},
\)
where each stage corresponds to one step of collaborative problem solving, such as decomposition, reasoning, verification, or integration.
At stage \(t\), let \(\mathcal{A}^{(t)} \subseteq \mathcal{A}\) denote the set of active agents. Multiple agents may be active simultaneously at the same stage.
For each active agent \(a_m \in \mathcal{A}^{(t)}\), we define its local context as
\(
c_m^{(t)} = \bigl(x,~\phi_m^{(t)},~\rho_m^{(t)}\bigr),
\)
where \(\phi_m^{(t)}\) denotes the information induced by previous actions of neighboring or predecessor agents, and \(\rho_m^{(t)}\) denotes the memory retrieved from \(\mathcal{M}_m\). 
Conditioned on the local context \(c_m^{(t)}\), agent \(a_m\) produces an action or intermediate output
\(
v_m^{(t)}=\pi_m\bigl(c_m^{(t)}\bigr),
\)
where \(\pi_m\) denotes the policy of agent \(a_m\). 
After all active agents produce their outputs, the system aggregates them as
\(
s^{(t)}=\mathrm{Agg}\bigl(\{v_m^{(t)}\}_{a_m\in\mathcal{A}^{(t)}}\bigr),
\)
and outputs the final answer.
\section{\ours}\label{sec:method}
\begin{figure}[t]
    \centering
    \includegraphics[width=\linewidth]{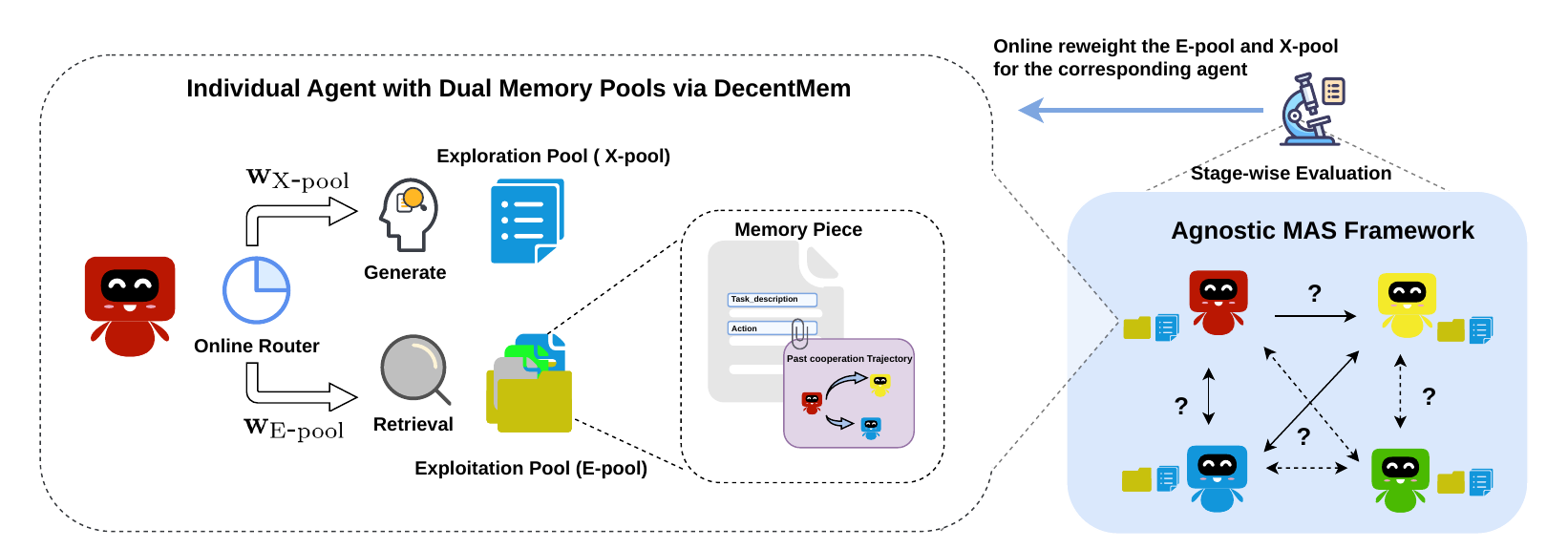}
    \vspace{-0.25in}
    \caption{
        Overview of \ours.
        Each agent maintains a private dual-pool memory (left): an \textit{exploitation pool (E-pool)} of consolidated trajectories from past tasks and an \textit{exploration pool (X-pool)} for generating novel candidates in unseen contexts. 
        At each stage, an online router selects between the two pools with probability proportional to their weights $w_{E-pool}$ and $w_{X-pool}$, retrieving from the E-pool or generating from the X-pool to produce the agent's action. 
        \ours is agnostic to the underlying MAS framework (right): agents collaborate under any topology, from pre-designed workflows to opportunistic coordination. 
        After execution, an external LLM-as-a-judge evaluates the trajectory stage by stage and reweights each agent's E-pool and X-pool based on stage-wise feedback.
        %
        }
    \label{fig:example}
    \vspace{-0.15in}
\end{figure}

In this section, we present the key design of \ours, a \textit{decentralized} memory mechanism for multi-agent systems. 
Each agent is equipped with dual memory pools that store collaboration-aware memory pieces. 
At execution time, an online router retrieves task-relevant experience from one pool or generates new memory from the other for exploration; after execution, the two pools are reweighted online based on stage-wise feedback from an LLM-as-a-judge.
%

\subsection{Dual-Memory Pools}
For each agent \(a_m\), \ours partitions the accessible memory into two pools:
\begin{equation}
    \mathcal{M}_m
    =
    \mathcal{M}_{m,\mathrm{E\mbox{-}pool}}
    \cup
    \mathcal{M}_{m,\mathrm{X\mbox{-}pool}}.
\end{equation}

The exploitation pool (E-pool), denoted by \(\mathcal{M}_{m,\mathrm{E\mbox{-}pool}}\), stores consolidated experience from previous tasks and supports retrieval-based exploitation over historical memory pieces that encode past collaboration trajectories. 
In contrast, the exploration pool (X-pool) serves as a temporary buffer for the current task and is used to generate novel experiences. 
As will be shown in \secref{sec:theory}, this dual-memory design helps the system achieve global reachability over the solution space.
Specifically, the E-pool enables retrieval-based exploitation through a local-walk mechanism, whereas the X-pool realizes a heuristic teleportation mechanism induced by the LLM prior.
Together, the two pools define a mixed search process that balances local exploitation with global exploration while preventing the search from being trapped in the local optima.

Continue with the definitions from \secref{sec:prelim}, each memory piece in \(\mathcal{M}_{m,\mathrm{E\mbox{-}pool}} \) is represented as
\[
z=\bigl(\xi,r^\star\bigr),
\qquad
r^\star = \big( r_{\mathrm{trajectory}}, r_{\mathrm{comment}}\big)
\]
where \( r_{\mathrm{trajectory}}\) includes specific action such as task decomposition and direct answering, together with cooperation trajectory passed to subsequent agents for the next stage induced by the action, and \(r_{\mathrm{comment}}\) denotes the agent’s action-level self-commentary, including its rationale for choosing the corresponding action in past interactions.
Unlike standard MAS memory, our memory piece $z$ records not only \emph{what} was solved, but also \emph{how} the task was solved and \emph{who} executed each sub-task.
Retrieved memory therefore serves as both an action prior and a coordination-trajectory prior, making each memory piece valuable not only for retrieval and policy reuse, but also for directly learning from past successful cooperation experience. 
%

\subsection{Memory Retrieval for New Tasks}
\ours is designed as a plug-in module that can be integrated into mainstream MAS frameworks.
When a task \(x\) arrives, it is first dispatched to an active agent \(a_n\) according to the underlying scheduling policy. 
Suppose that at stage \(t\), an active agent \(a_m\) encounters an intermediate sub-task \(x^{(t)}\). 
The online router of \(a_m\) first selects one memory pool from the dual-memory structure according to the corresponding weights \(w_{m,\mathrm{E\mbox{-}pool}}\) and \(w_{m,\mathrm{X\mbox{-}pool}}\). The probability of selecting the E-pool is defined as
\begin{equation}
    \alpha_{m,\mathrm{E\mbox{-}pool}}
    =
    \frac{w_{m,\mathrm{E\mbox{-}pool}}}
    {w_{m,\mathrm{E\mbox{-}pool}}+w_{m,\mathrm{X\mbox{-}pool}}},
    \qquad
    \alpha_{m,\mathrm{X\mbox{-}pool}}
    =
    1-\alpha_{m,\mathrm{E\mbox{-}pool}}.
\end{equation}
If the E-pool is selected, the agent retrieves memory pieces according to similarity:
\begin{equation}
\rho_m^{(t)}
=
\operatorname{Top\mbox{-}K}\!\left(
\mathcal{M}_{m,\mathrm{E\mbox{-}pool}};
\ \mathrm{sim}\bigl(e(x^{(t)}),e(\xi_z)\bigr)
\right),
\end{equation}
subject to a similarity threshold \(\tau\). 
If no sufficiently similar memory is found, the online router falls back to the X-pool.
This threshold avoids mismatched reuse from the E-pool when the current task deviates substantially from previously consolidated experience, thereby promoting faster adaptation to novel tasks.
If the X-pool is selected directly, the agent instantiates an exploratory memory piece \(z_{\text{new}} \) for the current context.
Either the retrieved set \(\rho_m^{(t)}\) or exploratory memory piece \(z_{\text{new}} \) is then provided to the LLM for  executable action generation:
\begin{equation}
v_m^{(t)}
=
\Pi_{\mathrm{act}}\!\bigl(x^{(t)},\phi_m^{(t)},\rho_m^{(t)} / z_{\text{new}}\bigr).
\end{equation}
Notably, when a retrieved E-pool memory piece contains a compatible collaboration trajectory, its historical allocation can be reused as a coordination prior for the next stage. 
After all active agents produce their outputs, the system aggregates them as
\begin{equation}
s^{(t)}
=
\mathrm{Agg}\bigl(\{v_m^{(t)}\}_{a_m\in\mathcal{A}^{(t)}}\bigr),
\end{equation}
and outputs the final answer after the last stage.

\subsection{Dual-Memory Update}
\label{sec:dual}
After execution, the full solution trajectory is evaluated stage by stage by an LLM evaluator. 
Rather than scoring only the final answer, the evaluator assesses each stage in terms of \textit{correctness}, \textit{allocation quality}, \textit{intermediate coherence}, and \textit{final integration}.
Specifically, let \(q_{\mathrm{prev}}\) and \(q_{\mathrm{curr}}\) denote the scores of two consecutive stages, and define
\(
\Delta_t=\mathbb{I}[q_{\mathrm{curr}} > q_{\mathrm{prev}}].
\)
The online router is then updated according to the memory pool used at stage \(t\).
Let \(\alpha=0.5\) and \(\beta=0.5\), if the E-pool is used, its weight is increased when \(\Delta_t=1\), and otherwise decayed:
\begin{equation}
w_{m,\mathrm{E\mbox{-}pool}}
\leftarrow
\begin{cases}
w_{m,\mathrm{E\mbox{-}pool}}+\alpha, & \text{if } \Delta_t=1,\\[4pt]
\max(1.0,\beta\,w_{m,\mathrm{E\mbox{-}pool}}), & \text{otherwise}.
\end{cases}
\end{equation}
In contrast, if the X-pool is used, the update is reversed:
\begin{equation}
w_{m,\mathrm{E\mbox{-}pool}}
\leftarrow
\begin{cases}
\max(1.0,\beta\,w_{m,\mathrm{E\mbox{-}pool}}), & \text{if } \Delta_t=1,\\[4pt]
w_{m,\mathrm{E\mbox{-}pool}}+\alpha, & \text{otherwise}.
\end{cases}
\end{equation}

Meanwhile, \(w_{m,\mathrm{X\mbox{-}pool}}=1.0\) remains fixed. 
In this way, successful exploitation increases reliance on the E-pool, while successful exploration prevents the router from over-committing to past experience.

After the task is completed, all memory pieces in the X-pool are consolidated into the E-pool, and the X-pool is reset:
\begin{equation}
\mathcal{M}_{m,\mathrm{E\mbox{-}pool}}
\leftarrow
\mathcal{M}_{m,\mathrm{E\mbox{-}pool}} \cup \mathcal{M}_{m,\mathrm{X\mbox{-}pool}},
\qquad
\mathcal{M}_{m,\mathrm{X\mbox{-}pool}} \leftarrow \varnothing.
\end{equation}
\section{Theoretical Analysis}
\label{sec:theory}
In this section, we formalize self-evolving multi-agent systems as strategy search over a graph-structured solution space and cast the learning objective as cumulative regret minimization. 
We then show that \ours guarantees global reachability over the decentralized search space and achieves \(O(\log T)\) cumulative regret.

\subsection{Problem Formulation}
We model the full solution space as a graph
\(
\mathcal{G}=(V,\mathcal{E}),
\)
where each node \(v\in V\) denotes a candidate solution strategy, and each edge \((v,v')\in\mathcal{E}\) indicates that the two strategies are highly similar and can be transformed into one another.
For each agent \(a_m\), we define an agent-specific solution subspace
\[
\mathcal{G}_m=(V_m,\mathcal{E}_m),
\qquad
V_m\subseteq V,\;\mathcal{E}_m\subseteq\mathcal{E},
\qquad
\mathcal{G}=\mathcal{G}_1\cup\mathcal{G}_2\cup\cdots\cup\mathcal{G}_M.
\]
Thus, \(\mathcal{G}_m\) characterizes the local geometry of the strategy space accessible to agent \(a_m\). Unlike centralized memory, which searches directly over the global space \(\mathcal{G}\), \ours enables each agent to search within its own subspace \(\mathcal{G}_m\).
Given a task \(x\), let
\(
\tau(x)=\{v_m^{(t)} \mid t=1,\dots,N,\ a_m\in\mathcal{A}^{(t)}\}
\)
denote the joint trajectory induced by all active agents, \(R_x\bigl(\tau(x),\hat{y}(x)\bigr)\) the task reward, which evaluates both the collaborative process and the final output, and
\(
R_x^\star=\max_{\tau,\hat{y}} R_x(\tau,\hat{y})
\)
the optimal achievable reward for task \(x\), the cumulative regret over the task set \(\mathcal{X}\) and the objective are therefore
\[
\mathrm{Regret}(\mathcal{X})
=
\sum_{x\in\mathcal{X}}
\left(
R_x^\star-\mathbb{E}\!\left[R_x\bigl(\tau(x),\hat{y}(x)\bigr)\right]
\right),
\quad
\mathrm{Objective}=\min_{\{\pi_m\}_{m=1}^M}\mathrm{Regret}(\mathcal{X}).
\]
%

\subsection{Global Reachability}
We analyze the search behavior of each agent as a dynamic random walk over its graph-structured solution subspace. 
For a given agent \(a_m\), let
\(
\mathcal{Q}_m=\{q_{m,1},q_{m,2},\dots,q_{m,L_m}\}
\)
denote the sequence of local subproblems encountered by \(a_m\) during execution, where \(q_{m,\ell}\) is the \(\ell\)-th subproblem assigned to agent \(a_m\). 
Note this sequence is agent-specific and should be distinguished from the global stage sequence used to describe the overall multi-agent collaboration process.

Let \(p_{m,\ell}\in\mathbb{R}^{|V_m|}\) denote the state distribution of agent \(a_m\) when solving subproblem \(q_{m,\ell}\), where \(p_{m,\ell}\) lies in the probability simplex over \(V_m\). 
In \ours, the two memory pools induce two complementary search operators.
The exploitation pool (E-pool) defines a local walk governed by a similarity-based transition matrix \(T_m\), which captures the reuse of historically relevant strategies. 
The exploration pool (X-pool) defines a teleportation mechanism induced by an LLM prior \(h_m\in\Delta^{|V_m|}\), where \((h_m)_i>0\) for every feasible state \(i\in V_m\).
The resulting transition is
\begin{equation}
p_{m,\ell+1}
=
M_{m,\ell}p_{m,\ell}
=
\left(
\alpha_{m,\ell}T_m+(1-\alpha_{m,\ell})\,h_m\mathbf{1}^\top
\right)p_{m,\ell},
\label{eq:global_transition}
\end{equation}
where
\(
\alpha_{m,\ell}
=
\frac{w_{m,\mathrm{E\mbox{-}pool}}}
{w_{m,\mathrm{E\mbox{-}pool}}+w_{m,\mathrm{X\mbox{-}pool}}}.
\label{eq:alpha_t}
\)
\eqnref{eq:global_transition} shows that \ours combines local exploitation with global exploration within the local subspace \(\mathcal{G}_m\): the first term preserves local structural bias, while the second term injects nonzero probability mass into every feasible region.

\begin{theorem}[Global Reachability]
Assume that \(w_{m,\mathrm{X\mbox{-}pool}} > 0\) throughout the search process, so that \(\alpha_{m,\ell}<1\) for all \(\ell\), and assume that \((h_m)_i>0\) for every feasible state \(i\in V_m\). Then the transition matrix
\[
M_{m,\ell}
=
\alpha_{m,\ell}T_m+(1-\alpha_{m,\ell})\,h_m\mathbf{1}^\top
\]
is strictly positive. Consequently, the induced Markov chain over \(V_m\) is irreducible and aperiodic, and the search process of agent \(a_m\) is globally reachable over its entire subspace \(\mathcal{G}_m\).
\end{theorem}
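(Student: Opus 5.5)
The argument is entrywise and elementary; the only care needed is in the conventions and in the time-inhomogeneity of the chain. First, fix conventions. By \eqnref{eq:global_transition}, $p_{m,\ell}$ is a column vector and $M_{m,\ell}$ acts on the left, so $M_{m,\ell}$ should be column-stochastic: $(M_{m,\ell})_{ij}$ is the probability of moving from state $j$ to state $i$. I will take $T_m$ to be a nonnegative, column-stochastic similarity transition matrix on $V_m$. Then $h_m\mathbf{1}^\top$ is the matrix whose every column equals $h_m$, so it is also column-stochastic. Since $\alpha_{m,\ell}\in[0,1)$, $M_{m,\ell}$ is a convex combination of two column-stochastic matrices and is therefore a valid transition matrix.

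\textbf{Positivity.} I note that $\alpha_{m,\ell}\ge 0$ because the weights are nonnegative; in fact $w_{m,\mathrm{E\mbox{-}pool}}\ge 1$ by the $\max(1.0,\cdot)$ update. The hypothesis $w_{m,\mathrm{X\mbox{-}pool}}>0$ gives $\alpha_{m,\ell}<1$. For any $i,j\in V_m$,
\[
(M_{m,\ell})_{ij}=\alpha_{m,\ell}(T_m)_{ij}+(1-\alpha_{m,\ell})(h_m)_i \;\ge\; (1-\alpha_{m,\ell})(h_m)_i>0,
\]
since the first term is nonnegative and the second is a product of two strictly positive numbers. Hence $M_{m,\ell}>0$ entrywise.

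\textbf{Irreducibility and aperiodicity.} For each fixed $\ell$, a strictly positive stochastic matrix has a one-step positive transition probability between every ordered pair of states, so its chain is irreducible. It also has $(M_{m,\ell})_{ii}>0$ for every $i$, so every state has period $1$ and the chain is aperiodic.

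\textbf{Time-inhomogeneity.} Because $\alpha_{m,\ell}$ changes with $\ell$, I will phrase global reachability directly on the inhomogeneous process. Starting from any $p_{m,0}$ in the simplex, after a single step
\[
(p_{m,\ell+1})_i=\sum_j (M_{m,\ell})_{ij}(p_{m,\ell})_j \ge (1-\alpha_{m,\ell})(h_m)_i>0 .
\]
So every node of $\mathcal{G}_m$ is visited with positive probability at every step $\ell\ge 1$, regardless of the current distribution.

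\textbf{Main obstacle.} The delicate point is uniformity. If one wants reachability with a lower bound independent of $\ell$ (for example, a Doeblin minorization giving geometric mixing), one needs $\sup_\ell \alpha_{m,\ell}<1$. Under the update rule this requires $w_{m,\mathrm{E\mbox{-}pool}}$ to stay bounded, which is not guaranteed, since it can grow by $\alpha$ indefinitely. I would therefore state the theorem's conclusion per-step, as above, which is all the statement requires. I would add a remark that a bounded E-pool weight gives a uniform constant $\epsilon=(1-\bar\alpha)\min_i(h_m)_i$ and hence a Doeblin-type contraction.
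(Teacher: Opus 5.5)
Your proof is correct and follows the paper's argument. Both establish the entrywise bound $(M_{m,\ell})_{ij}\ge(1-\alpha_{m,\ell})(h_m)_i>0$ and then use the standard fact that a strictly positive stochastic matrix is irreducible and aperiodic. Your extra per-step bound $(p_{m,\ell+1})_i\ge(1-\alpha_{m,\ell})(h_m)_i$ and your uniformity remark treat the time-varying $\alpha_{m,\ell}$ more carefully than the paper does, since the paper simply invokes a unique stationary distribution as if the chain were homogeneous.
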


Since this argument holds for every agent \(a_m\), \ours ensures global reachability across the decentralized solution space \(\mathcal{G}=\bigcup_{m=1}^M \mathcal{G}_m\). 
The detailed proof is deferred to \appref{app:global_reachability}.

\subsection{Logarithmic Cumulative Regret}
We next analyze the routing efficiency of \ours from an online optimization perspective. 
At round \(t\), the online router of agent \(a_m\) selects between the E-pool and the X-pool according to their current weights. 
Let \(r_{\mathrm{E}}(\alpha)\) and \(r_{\mathrm{X}}(\alpha)\) denote the expected rewards of the E-pool and X-pool under routing probability \(\alpha\), respectively. 
The resulting expected reward is
\(
r(\alpha)
=
\alpha\,r_{\mathrm{E}}(\alpha)
+
(1-\alpha)\,r_{\mathrm{X}}(\alpha).
\label{eq:r_alpha_main}
\)

\begin{Assumption}
\label{assumption function r}
The function \(r:[0.5,1]\to[0,1]\) is strictly concave, twice continuously differentiable, and admits a unique maximizer \(\alpha^\star\in(0.5,1)\).
\end{Assumption}

For a routing sequence \(\{\alpha_t\}_{t=1}^T\), we define the cumulative regret as
\(
R(T)
=
\sum_{t=1}^T
\Bigl(
r(\alpha^\star)-\mathbb{E}[r(\alpha_t)]
\Bigr).
\label{eq:regret_main}
\)
The update rule of \ours induces an adaptive recursion on \(\alpha_t\): successful exploitation increases \(\alpha_t\), whereas successful exploration suppresses over-dominance of the E-pool and decreases \(\alpha_t\). 
The router steers the system toward the optimal exploitation--exploration balance.

\begin{theorem}[Logarithmic Regret]
Under Assumption~\ref{assumption function r}, the routing policy of \ours satisfies
\(
\mathbb{E}[R(T)] = O(\log T).
\)
Hence, \ours achieves \(O(\log T)\) cumulative regret.
\end{theorem}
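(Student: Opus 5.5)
The argument runs in three steps: a quadratic bound on the instantaneous regret, a stochastic-approximation view of the weight update, and a Robbins--Monro mean-square rate that makes the summed regret logarithmic.

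\textbf{Step 1: instantaneous regret is quadratic.} By Assumption~\ref{assumption function r}, $r$ is $C^2$ on the compact interval $[0.5,1]$ and maximized at the interior point $\alpha^\star$, so $r'(\alpha^\star)=0$. Taylor's theorem then gives
\[
0\le r(\alpha^\star)-r(\alpha)\le \tfrac{L}{2}(\alpha-\alpha^\star)^2, \qquad L=\max_{[0.5,1]}|r''|.
\]
Since $w_{\mathrm{E}}\ge 1=w_{\mathrm{X}}$ always, $\alpha_t\in[0.5,1)$ and the bound applies. Hence
\[
\mathbb{E}[R(T)]\le \tfrac{L}{2}\sum_{t=1}^T \mathbb{E}\bigl[(\alpha_t-\alpha^\star)^2\bigr],
\]
and it suffices to show that the mean-square error satisfies $e_t:=\mathbb{E}[(\alpha_t-\alpha^\star)^2]\le C/t$. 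Summing $C/t$ gives $O(\log T)$.

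\textbf{Step 2: the update as stochastic approximation.} I would write the router update as
\[
\alpha_{t+1}=\alpha_t+\eta_t\bigl(g(\alpha_t)+\varepsilon_{t+1}\bigr).
\]
Here $\varepsilon_{t+1}$ is a bounded martingale difference, $g(\alpha)=\mathbb{E}[\Delta w_{\mathrm E}\mid\alpha]\cdot\partial\alpha/\partial w_{\mathrm E}$ is the mean drift, and $\eta_t$ is the effective step size. The sign structure of the rule (reward E-success and X-failure by raising $w_{\mathrm E}$; otherwise shrink it) makes the expected increment in $w_{\mathrm E}$ positive exactly when exploiting is more profitable than exploring at the current mixture. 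I would argue that this drift has the sign of $r'(\alpha)$, i.e.\ $g(\alpha)(\alpha-\alpha^\star)\le -\mu(\alpha-\alpha^\star)^2$ for some $\mu>0$, using strict concavity (so $r''\le -\kappa<0$ on the compact interval and $r'(\alpha)(\alpha-\alpha^\star)\le-\kappa(\alpha-\alpha^\star)^2$). Since $\partial\alpha/\partial w_{\mathrm E}=1/(1+w_{\mathrm E})^2$, the effective step shrinks as the weight concentrates. To obtain a Robbins--Monro schedule $\eta_t\asymp 1/t$, I would interpret the increments as averaged over the growing evidence, or posit an explicit annealing. I would state this identification as a modeling step, since the literal rule with fixed $\alpha=\beta=0.5$ is a constant-step process.

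\textbf{Step 3: the mean-square rate.} Expanding the square and conditioning gives
\[
e_{t+1}\le(1-2\mu\eta_t)e_t+\eta_t^2\sigma^2 .
\]
With $\eta_t=c/t$ and $2\mu c>1$, the standard Chung lemma, proved by induction, yields $e_t\le C/t$. Plugging this into Step 1 gives $\mathbb{E}[R(T)]\le \tfrac{LC}{2}(1+\log T)=O(\log T)$.

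\textbf{Main obstacle.} The hard part is Step 2: justifying that the heuristic multiplicative/additive weight rule has a drift aligned with $r'(\alpha)$ and a diminishing effective step. With constant steps, the iterates only converge to an $O(\eta)$ neighborhood of $\alpha^\star$, and the regret would then be linear in $T$. I would first try to handle this by bounding the drift near $\alpha^\star$ via a local linearization of $g$. Failing that, I would make the decaying step size an explicit assumption on the router.
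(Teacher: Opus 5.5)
Your Step~1 is correct. It is cleaner than the paper, which credits the bound $r(\alpha^\star)-r(\alpha)\le\mu(\alpha-\alpha^\star)^2$ to strong concavity rather than to $C^2$ plus $r'(\alpha^\star)=0$. Your overall route is essentially the paper's. The paper writes the branch maps $\alpha\mapsto\frac{1+\alpha}{3-\alpha}$ and $\alpha\mapsto\frac{\alpha}{2-\alpha}$, defines the drift $g$, and asserts local contraction toward $\alpha^\star$ ``since $r$ is strictly concave.'' It then gets $1/\ell$ steps by ``standard time rescaling'' and cites a stochastic-approximation theorem for the $O(1/\ell)$ mean-square rate. So the obstacle you flag is not resolved there either, and it is a genuine gap, not a missing detail: the claim that the drift has the sign of $r'(\alpha)$ is false under Assumption~\ref{assumption function r}. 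The router only sees whether the pool it used succeeded at the current mixture. Its drift therefore depends on $r_{\mathrm E}(\alpha),r_{\mathrm X}(\alpha)$ only through $q(\alpha)=\alpha r_{\mathrm E}(\alpha)+(1-\alpha)\bigl(1-r_{\mathrm X}(\alpha)\bigr)$. Your linearized drift, for instance, is $\mathbb{E}[\Delta w_{\mathrm E}\mid w]=\tfrac12\bigl(q-(1-q)w\bigr)$, whose sign is that of $q(\alpha)-\alpha=(1-\alpha)(1-r_{\mathrm X})-\alpha(1-r_{\mathrm E})$. That is not the sign of $r'(\alpha)$. It is not even the sign of $r_{\mathrm E}-r_{\mathrm X}$, the ``which pool is more profitable'' comparison you invoke, which itself differs from $r'$ by $\alpha r_{\mathrm E}'+(1-\alpha)r_{\mathrm X}'$.

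Concretely, take $r_{\mathrm X}\equiv\tfrac12$ and $r(\alpha)=0.6-(\alpha-\tfrac34)^2$, i.e.\ $r_{\mathrm E}(\alpha)=\bigl(r(\alpha)-\tfrac{1-\alpha}{2}\bigr)/\alpha\in(0,1)$. Assumption~\ref{assumption function r} holds with $\alpha^\star=\tfrac34$, and $q=r\le0.6$. At $\alpha^\star$ the E-pool is strictly more profitable ($r_{\mathrm E}=\tfrac{19}{30}$, although $r'(\alpha^\star)=0$), yet $w_{\mathrm E}$ drifts down. Your linearized drift vanishes near $\alpha\approx0.57$. The exact drift $g(\alpha)=(1-\alpha)\bigl[q\frac{1-\alpha}{3-\alpha}-(1-q)\frac{\alpha}{2-\alpha}\bigr]$ is negative on all of $[\tfrac12,1)$, since the bracket is at most $0.6\cdot\tfrac15-0.4\cdot\tfrac13<0$. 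The two drifts disagree because halving $w$ is not a small step. Hence no inequality $g(\alpha)(\alpha-\alpha^\star)\le-\mu(\alpha-\alpha^\star)^2$ holds near $\alpha^\star$, and linearizing at $\alpha^\star$ cannot help because $g(\alpha^\star)\neq0$.

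Your step-size concern is likewise fatal rather than technical. The increments $\frac{(1-\alpha)^2}{3-\alpha}$ and $-\frac{\alpha(1-\alpha)}{2-\alpha}$ do not depend on $t$, so $(w_t)$ with its floor at $1$ is a time-homogeneous Markov chain. In the example it is positive recurrent: for $w\ge2$ the expected change is at most $0.3-0.2w<0$, and from $w<2$ it jumps to $1$ with probability at least $0.4$. It can never stay at $w^\star=3$. By the ergodic theorem the per-step regret therefore averages to a positive constant, and $\mathbb{E}[R(T)]=\Theta(T)$, by the same mechanism as the paper's corollary on fixed routing. The theorem therefore cannot be proved for the router as specified under Assumption~\ref{assumption function r} alone. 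What your argument (and the paper's) actually supports is a conditional result for a modified router. That router needs an annealed step $\eta_t=c/t$ \emph{and} an added hypothesis that the mean field has a unique zero at $\alpha^\star$ with $g(\alpha)(\alpha-\alpha^\star)\le-\mu(\alpha-\alpha^\star)^2$ and $2\mu c>1$. Both must be assumed; neither can be derived. Separately, strict concavity does not give $r''\le-\kappa<0$ (take $r=r_0-(\alpha-\alpha^\star)^4$), so even the $r'$-based contraction you sketch would need strong concavity.
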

The proof is detailed in \appref{app:log_regret}. 
Moreover, stochastic bandit theory implies that, for some problem instances, any policy incurs expected cumulative regret at least \(\Omega(\log T)\)~\citep{Auer2002FinitetimeAO}.
Therefore, the regret bound of \ours is order-optimal up to constant factors.
\section{Experiment}\label{sec:experiments}
\subsection{Experimental Setup}
\paragraph{Datasets.}
We evaluate \ours on five public benchmarks spanning four task categories. 
These cover mathematical reasoning (AIME25~\citep{math-ai2025}, AIME24~\citep{Maxwell-Jia2024}), code generation (MBPP-Plus~\citep{evalplus_mbppplus_hf}), question answering (BBH~\citep{suzgun2023challenging}), and embodied decision-making (ALFWorld~\citep{shridhar2020alfworld}). AIME25 and AIME24 are combined into a single  dataset and reported as AIME25\&24 in the tables.

\paragraph{Baselines.}
We compare \ours against three centralized memory baselines including MetaGPT~\citep{Hong2023MetaGPTMP}, ChatDev~\citep{qian2024chatdev}, and G-Memory~\citep{zhang2025g}, and a no-memory baseline that runs the underlying MAS framework without any cross-session memory.

\paragraph{MAS Frameworks and LLM Backbones.} 
A central design property of \ours is that it is agnostic to both the underlying MAS framework and the LLM backbone. 
To validate this, we deliberately span the experimental matrix along both axes.
We integrate \ours and all baselines into three MAS frameworks chosen to cover a broad spectrum of coordination structures: \textbf{AutoGen}~\citep{wu2024autogen}, which represents pre-designed workflow-based collaboration; \textbf{DyLAN}~\citep{liu2023dynamic}, which captures more dynamic and partially stochastic interaction patterns; and \textbf{AgentNet}~\citep{Yang2025AgentNetDE}, which reflects unstructured and opportunistic coordination. 
The degree of stochasticity increases progressively from AutoGen to AgentNet, allowing us to test whether \ours's gains are robust across coordination regimes. 
And, as we will show in \secref{subsec:main_results}, the gains in fact widen as stochasticity grows.
We instantiate each MAS framework with five open-source LLM backbones spanning two model families and various sizes: \textsc{Qwen3-4B}, \textsc{Qwen3-8B}, \textsc{Qwen3-14B}~\citep{Yang2025Qwen3TR}, \textsc{Gemma4-E2B-it}, and \textsc{Gemma4-E4B-it}~\citep{farabet2026gemma4}.
This range lets us verify that \ours's benefits do not depend on a particular model family, scale, or architecture. 
We instantiate the \textsc{Qwen} series models locally via Ollama~\citep{ollama}, and the \textsc{Gemma} series models locally using the Hugging Face Transformers library~\citep{transformers}.
%


\subsection{Main Results}\label{subsec:main_results}
\paragraph{\ours consistently outperforms all baselines.} 
As shown in \tabref{tab:main_qwen3} and~\ref{tab:main_gemma4}, \ours achieves the best average accuracy in 14 of 15 (backbone, framework) cells, with an average relative improvement of 8.6\% over the strongest centralized baseline and 26.1\% over the no-memory baseline (up to 52.5\% on Qwen3-4B + AgentNet). 
Beyond steady-state accuracy, \ours also exhibits stronger self-evolution with accumulated experience, reaching strong performance with substantially fewer tasks (\secref{subsec:evolve}).

\paragraph{The advantage of decentralization grows with coordination stochasticity.} 
As stochasticity increases from AutoGen (pre-designed workflows) to DyLAN (partially stochastic) to AgentNet (opportunistic), the relative margin widens monotonically on both families: 2.7\% / 9.2\% / 23.1\% on Qwen3 and 1.7\% / 3.9\% / 6.8\% on Gemma4. 
This is direct evidence for the paper's central claim --- when coordination is less predictable, centralized memory homogenizes agent behavior more aggressively, and preserving per-agent experience matters more.
\begin{table}[t]
  \centering
  \caption{Main results on the Qwen3 family across three MAS frameworks and five datasets. \best{Bold} marks the best result and \underline{underline} marks the second-best within each (backbone, framework) block.}
  \label{tab:main_qwen3}
  \setlength{\tabcolsep}{10pt}
  \resizebox{\textwidth}{!}{%
    \begin{tabular}{c c c | c c c c c}
    \toprule
    \textbf{Backbone} & \textbf{Framework} & \textbf{Memory} & \textbf{AIME25\&24} & \textbf{MBPP-Plus} & \textbf{ALFWorld} & \textbf{BBH} & \textbf{Avg.} \\
    \midrule

    \multirow{15}{*}{\rotatebox{90}{\textsc{Qwen3-4B}}}
     & \multirow{5}{*}{AgentNet}
       & No memory   & 13.33 & 53.20 & 46.12 & 28.13 & 35.20 \\
     & & MetaGPT     & 8.33  & 59.14 & 50.47 & 40.86 & 39.70 \\
     & & ChatDev     & \underline{18.33} & 56.40 & 49.23 & 33.67 & 39.41 \\
     & & G-Memory    & 10.00 & \underline{63.18} & \underline{52.14} & \underline{48.14} & \underline{43.37} \\
     & & \textbf{\ours} & \best{23.33} & \best{67.53} & \best{64.45} & \best{59.43} & \best{53.69} \\
    \cmidrule(l){2-8}
     & \multirow{5}{*}{DyLAN}
       & No memory   & 16.67 & 54.72 & 48.54 & 41.17 & 40.28 \\
     & & MetaGPT     & 15.00 & 58.60 & 51.17 & \underline{45.23} & 42.50 \\
     & & ChatDev     & \underline{21.67} & 61.23 & 48.56 & 44.73 & 44.05 \\
     & & G-Memory    & 18.33 & \underline{63.93} & \underline{52.30} & 43.97 & \underline{44.63} \\
     & & \textbf{\ours} & \best{26.67} & \best{64.50} & \best{57.25} & \best{52.19} & \best{50.15} \\
    \cmidrule(l){2-8}
     & \multirow{5}{*}{AutoGen}
       & No memory   & 18.33 & 58.14 & 54.27 & 44.43 & 43.79 \\
     & & MetaGPT     & 11.67 & 58.92 & 59.53 & 54.19 & 46.08 \\
     & & ChatDev     & \underline{20.00} & \underline{64.72} & 63.87 & 53.37 & 50.49 \\
     & & G-Memory    & 15.00 & 63.87 & \best{69.17} & \underline{61.53} & \underline{52.39} \\
     & & \textbf{\ours} & \best{21.67} & \best{65.63} & \underline{67.73} & \best{63.80} & \best{54.71} \\
    \midrule

    \multirow{15}{*}{\rotatebox{90}{\textsc{Qwen3-8B}}}
     & \multirow{5}{*}{AgentNet}
       & No memory   & 23.33 & 62.94 & 61.40 & 35.13 & 45.70 \\
     & & MetaGPT     & 13.34 & 66.94 & 66.51 & 57.14 & 50.98 \\
     & & ChatDev     & \underline{31.67} & 64.99 & 64.03 & 42.11 & 50.70 \\
     & & G-Memory    & 16.67 & \underline{70.94} & \underline{70.19} & \underline{62.18} & \underline{55.00} \\
     & & \textbf{\ours} & \best{36.67} & \best{77.93} & \best{80.29} & \best{76.19} & \best{67.77} \\
    \cmidrule(l){2-8}
     & \multirow{5}{*}{DyLAN}
       & No memory   & 26.67 & 65.91 & 63.19 & 54.68 & 52.61 \\
     & & MetaGPT     & 23.33 & 67.97 & 68.21 & \underline{61.58} & 55.27 \\
     & & ChatDev     & \underline{31.67} & 70.94 & 64.64 & 59.61 & 56.72 \\
     & & G-Memory    & 30.00 & \best{74.49} & \underline{69.43} & 58.62 & \underline{58.14} \\
     & & \textbf{\ours} & \best{40.00} & \underline{72.99} & \best{74.42} & \best{65.52} & \best{63.23} \\
    \cmidrule(l){2-8}
     & \multirow{5}{*}{AutoGen}
       & No memory   & 28.33 & 66.94 & 78.81 & 61.88 & 58.99 \\
     & & MetaGPT     & 20.00 & 68.99 & 83.25 & 76.12 & 62.09 \\
     & & ChatDev     & \underline{28.33} & 71.05 & 89.14 & 71.14 & 64.91 \\
     & & G-Memory    & 25.00 & \underline{73.92} & \best{92.11} & \underline{82.09} & \underline{68.28} \\
     & & \textbf{\ours} & \best{31.67} & \best{75.97} & \underline{91.54} & \best{85.07} & \best{71.06} \\
    \midrule

    \multirow{15}{*}{\rotatebox{90}{\textsc{Qwen3-14B}}}
     & \multirow{5}{*}{AgentNet}
       & No memory   & 26.67 & 68.72 & 66.20 & 42.61 & 51.05 \\
     & & MetaGPT     & 20.00 & 68.81 & 71.32 & 65.48 & 56.40 \\
     & & ChatDev     & \underline{36.67} & \underline{76.78} & 68.50 & 48.92 & 57.72 \\
     & & G-Memory    & 23.33 & 71.29 & \underline{75.61} & \underline{68.52} & \underline{59.69} \\
     & & \textbf{\ours} & \best{41.66} & \best{82.27} & \best{83.59} & \best{84.30} & \best{72.95} \\
    \cmidrule(l){2-8}
     & \multirow{5}{*}{DyLAN}
       & No memory   & 30.00 & 72.10 & 68.82 & 60.52 & 57.86 \\
     & & MetaGPT     & 26.67 & 76.95 & 73.11 & \underline{68.86} & 61.40 \\
     & & ChatDev     & \underline{35.00} & 79.17 & 69.43 & 65.28 & 62.22 \\
     & & G-Memory    & 31.67 & \best{85.75} & \underline{74.24} & 66.84 & \underline{64.62} \\
     & & \textbf{\ours} & \best{40.00} & \underline{82.39} & \best{80.11} & \best{72.43} & \best{68.73} \\
    \cmidrule(l){2-8}
     & \multirow{5}{*}{AutoGen}
       & No memory   & 30.00 & 74.21 & 81.51 & 67.33 & 63.26 \\
     & & MetaGPT     & 23.33 & 76.37 & 89.43 & 81.68 & 67.70 \\
     & & ChatDev     & \underline{33.33} & 80.36 & \underline{94.89} & 79.28 & 71.97 \\
     & & G-Memory    & 33.33 & \best{85.23} & \best{96.69} & \underline{89.20} & \best{76.11} \\
     & & \textbf{\ours} & \best{35.00} & \underline{84.74} & 93.13 & \best{90.50} & \underline{75.84} \\
    \bottomrule
    \end{tabular}%
  }
  \vspace{-0.25in}
\end{table}
\begin{table}[t]
  \centering
  \caption{Main results on the Gemma4 family across three MAS frameworks and five datasets. \best{Bold} marks the best result and \underline{underline} marks the second-best within each (backbone, framework) block.}
  \label{tab:main_gemma4}
  \setlength{\tabcolsep}{10pt}
  \resizebox{\textwidth}{!}{%
    \begin{tabular}{c c c | c c c c c}
    \toprule
    \textbf{Backbone} & \textbf{Framework} & \textbf{Memory} & \textbf{AIME25\&24} & \textbf{MBPP-Plus} & \textbf{ALFWorld} & \textbf{BBH} & \textbf{Avg.} \\
    \midrule

    \multirow{15}{*}{\rotatebox{90}{\textsc{Gemma4-E2B}}}
     & \multirow{5}{*}{AgentNet}
       & No memory   & 23.33 & 55.67 & 60.43 & 21.32 & 40.19 \\
     & & MetaGPT     & 18.33 & 57.99 & 63.37 & 30.85 & 42.64 \\
     & & ChatDev     & \underline{26.67} & 57.00 & 62.27 & 27.11 & 43.26 \\
     & & G-Memory    & 20.00 & \underline{62.13} & \underline{68.47} & \underline{38.15} & \underline{47.19} \\
     & & \textbf{\ours} & \best{25.00} & \best{64.08} & \best{70.80} & \best{42.56} & \best{50.61} \\
    \cmidrule(l){2-8}
     & \multirow{5}{*}{DyLAN}
       & No memory   & 23.33 & 57.62 & 62.29 & 28.67 & 42.98 \\
     & & MetaGPT     & 21.66 & 60.15 & 64.51 & 35.42 & 45.44 \\
     & & ChatDev     & \underline{28.33} & 62.81 & 63.00 & 31.47 & 46.40 \\
     & & G-Memory    & 25.00 & \best{69.89} & \underline{68.11} & \underline{36.39} & \underline{49.85} \\
     & & \textbf{\ours} & \best{28.33} & \underline{66.79} & \best{69.53} & \best{43.47} & \best{52.03} \\
    \cmidrule(l){2-8}
     & \multirow{5}{*}{AutoGen}
       & No memory   & 25.00 & 59.77 & 66.38 & 35.15 & 46.58 \\
     & & MetaGPT     & 28.33 & 62.93 & 67.49 & 40.83 & 49.89 \\
     & & ChatDev     & \underline{30.00} & \underline{65.57} & \best{74.59} & \best{46.45} & \underline{54.15} \\
     & & G-Memory    & 28.33 & 64.82 & 66.73 & \underline{42.17} & 50.51 \\
     & & \textbf{\ours} & \best{33.33} & \best{69.28} & \underline{73.43} & 41.87 & \best{54.48} \\
    \midrule

    \multirow{15}{*}{\rotatebox{90}{\textsc{Gemma4-E4B}}}
     & \multirow{5}{*}{AgentNet}
       & No memory   & 33.33 & 70.23 & 75.67 & 31.92 & 52.79 \\
     & & MetaGPT     & 25.00 & 74.49 & 77.47 & 48.14 & 56.28 \\
     & & ChatDev     & \underline{38.33} & 72.18 & \underline{81.59} & 35.74 & 56.96 \\
     & & G-Memory    & 30.00 & \best{78.39} & \best{82.39} & \underline{55.38} & \underline{61.54} \\
     & & \textbf{\ours} & \best{43.33} & \underline{77.17} & 81.46 & \best{59.69} & \best{65.41} \\
    \cmidrule(l){2-8}
     & \multirow{5}{*}{DyLAN}
       & No memory   & 36.67 & 72.49 & 76.28 & 45.17 & 57.65 \\
     & & MetaGPT     & 31.67 & 74.69 & 80.49 & 52.41 & 59.82 \\
     & & ChatDev     & \underline{38.33} & 75.37 & 77.23 & 50.69 & 60.41 \\
     & & G-Memory    & 36.67 & \best{82.45} & \best{84.49} & \underline{55.83} & \underline{64.86} \\
     & & \textbf{\ours} & \best{41.67} & \underline{81.49} & \underline{84.21} & \best{60.84} & \best{67.05} \\
    \cmidrule(l){2-8}
     & \multirow{5}{*}{AutoGen}
       & No memory   & 35.00 & 75.52 & 82.73 & 49.53 & 60.70 \\
     & & MetaGPT     & 36.67 & 77.69 & 86.27 & 58.37 & 64.75 \\
     & & ChatDev     & \best{41.67} & \best{83.04} & 88.99 & 55.46 & 67.29 \\
     & & G-Memory    & 38.33 & \underline{81.17} & \underline{89.17} & \underline{63.72} & \underline{68.10} \\
     & & \textbf{\ours} & \underline{40.00} & 79.39 & \best{92.28} & \best{68.09} & \best{69.94} \\
    \bottomrule
    \end{tabular}%
}
\end{table}

\section{Analysis and Ablation Study}
\subsection{Self-Evolution with Experience}\label{subsec:evolve}
%
%
\figref{fig:experience} traces cumulative accuracy on MBPP-Plus across the three MAS frameworks under Qwen3-4B. 
\ours reaches strong performance substantially faster than all three centralized baselines, with the convergence-speed advantage most pronounced on DyLAN (\textasciitilde2.5× faster). 
On AgentNet, \ours also surpasses every baseline's final accuracy, indicating that under stochastic coordination it not only learns faster but discovers strategies centralized memory cannot recover.

\begin{figure}[h!]
    \centering
    \includegraphics[width=1\linewidth]{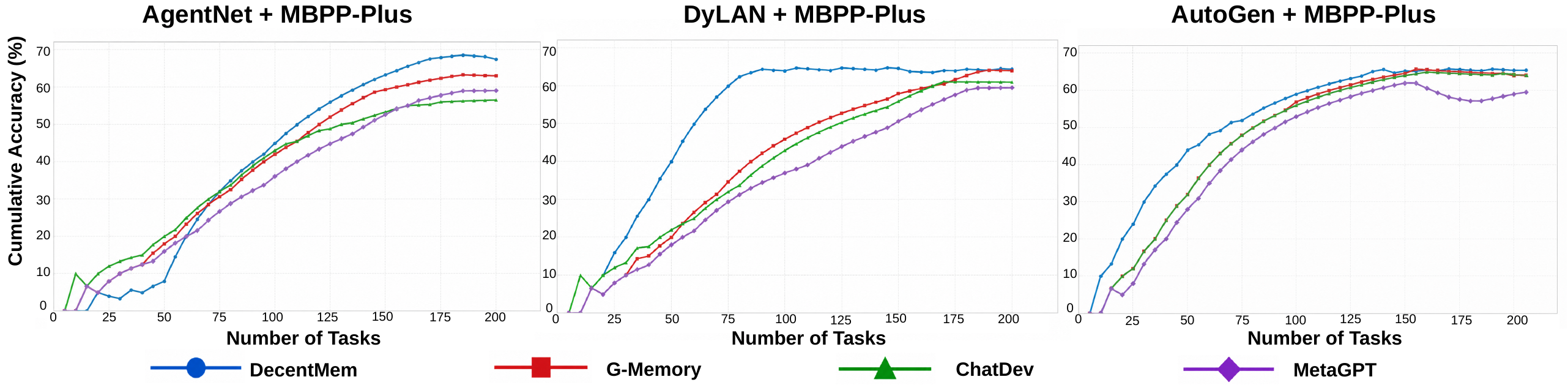}
    \vspace{-0.2in}
    \caption{
        Cumulative accuracy vs.\ number of tasks on MBPP-Plus across three MAS frameworks under Qwen3-4B.
        %
        %
        \ours exhibits stronger self-evolution than all three centralized baselines.
    }
    \label{fig:experience}
    \vspace{-0.2in}
\end{figure}
\begin{figure}[htb]
    \centering
    \includegraphics[width=.9\linewidth]{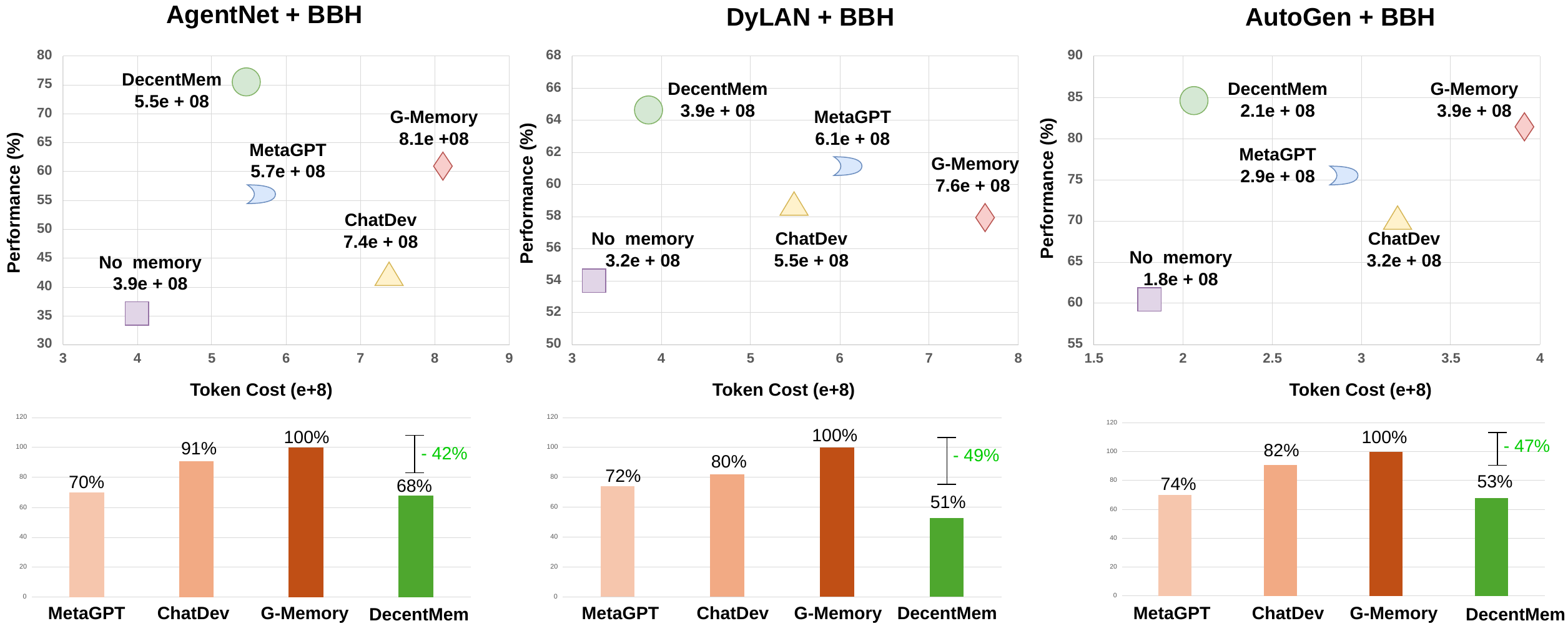}
    \vspace{-0.05in}
    \caption{
        Token cost vs.\ performance on BBH across three MAS frameworks under QWEN3-8B.
    }
    \label{fig:cost_analysis}
    \vspace{-0.15in}
\end{figure}

\begin{table}[h!]
\centering
\caption{
    Ablation of the online router. 
    All variants share the same dual-pool architecture; only the routing policy differs. 
    \textbf{Bold} indicates the best performance.
}
\label{tab:ablation_router}
\footnotesize
\setlength{\tabcolsep}{4pt}
\renewcommand{\arraystretch}{0.95}
\resizebox{\textwidth}{!}{%
\begin{tabular}{@{}ll cc cc cc cc cc cc@{}}
\toprule
 & & \multicolumn{4}{c}{\textbf{AgentNet}}
   & \multicolumn{4}{c}{\textbf{DyLAN}}
   & \multicolumn{4}{c}{\textbf{AutoGen}} \\
\cmidrule(lr){3-6}\cmidrule(lr){7-10}\cmidrule(lr){11-14}
 & & \multicolumn{2}{c}{Qwen3-8B} & \multicolumn{2}{c}{Gemma4-E4B}
   & \multicolumn{2}{c}{Qwen3-8B} & \multicolumn{2}{c}{Gemma4-E4B}
   & \multicolumn{2}{c}{Qwen3-8B} & \multicolumn{2}{c}{Gemma4-E4B} \\
\cmidrule(lr){3-4}\cmidrule(lr){5-6}\cmidrule(lr){7-8}\cmidrule(lr){9-10}\cmidrule(lr){11-12}\cmidrule(lr){13-14}
\textbf{Routing} & $\alpha$
 & BBH & AIME & BBH & AIME
 & BBH & AIME & BBH & AIME
 & BBH & AIME & BBH & AIME \\
\midrule
Exploit & 1   & 74.27 & 26.67 & 57.23 & 30.00
              & 63.59 & 31.67 & 58.72 & 40.00
              & 84.29 & 28.33 & 65.37 & 33.33 \\
Explore & 0   & 35.13 & 23.33 & 31.92 & 33.33
              & 54.68 & 26.67 & 45.17 & 36.67
              & 61.88 & 28.33 & 49.53 & 35.00 \\
Fixed   & 0.5 & 65.28 & 33.33 & 53.18 & 40.00
              & 60.21 & 35.00 & 55.38 & \textbf{43.33}
              & 78.34 & \textbf{31.67} & 63.38 & 38.33 \\
Online  & --
 & \textbf{76.19} & \textbf{36.67} & \textbf{59.69} & \textbf{43.33}
 & \textbf{65.52} & \textbf{40.00} & \textbf{60.84} & 41.67
 & \textbf{85.07} & \textbf{31.67} & \textbf{68.09} & \textbf{40.00} \\
\bottomrule
\end{tabular}}
\vspace{-0.2in}
\end{table}

\subsection{Cost Analysis}\label{subsec:cost}
Beyond accuracy, a practical memory framework must scale efficiently in token usage.
\figref{fig:cost_analysis} plots performance against token cost on BBH across the three MAS frameworks under Qwen3-8B. 
\ours sits at the upper-left corner of all three plots --- highest accuracy at the lowest token cost. 
Relative to G-Memory, the strongest centralized baseline, \ours reduces token consumption by 32\%, 49\%, and 47\% on AgentNet, DyLAN, and AutoGen respectively (43\% on average) while improving accuracy.

\subsection{Ablation Study}\label{subsec:ablation}
To isolate the contribution of the online weighting, we compare it against three fixed routing strategies sharing the same dual-memory architecture: \textit{Exploitation Only} ($\alpha=1$), \textit{Exploration Only} ($\alpha=0$), and \textit{Fixed Weight} ($\alpha=0.5$). 
As shown in \tabref{tab:ablation_router}, replacing it with the strongest fixed policy (\textit{Exploitation Only}) costs 6.93\%, 3.51\%, and 3.38\% on AgentNet, DyLAN, and AutoGen respectively, with larger gaps against the other two. 
Notably, \textit{Exploration Only} collapses on BBH but remains competitive on AIME, showing that exploration alone suffices for fresh reasoning problems but cannot substitute for accumulated experience on tasks with reusable structure.

\newpage
\section{Conclusion and Limitation}

In this paper, we argued that centralized memory is the non-optimal default for multi-agent systems, as sharing a single pool erodes the role-complementary specialization that motivates having multiple agents at all, and proposed \ours, a decentralized dual-pool memory with online routing between exploitation and exploration. 
The construction admits an $O(\log T)$ regret guarantee under a graph-walk-with-teleportation reading, and the empirical margin over the strongest centralized baseline grows monotonically as coordination becomes more stochastic.
%


\paragraph{Limitation.} 
Although \ours is evaluated across four domains, broader validation on more diverse and high-stakes tasks, such as legal and medical reasoning, would further strengthen its empirical soundness. 
We leave this direction for future work.

\bibliographystyle{unsrtnat}
\bibliography{references}

\newpage
\appendix

\section{Additional Theoretical Analysis}
\label{app:theory}

In this appendix, we provide a rigorous theoretical foundation for the dual-memory design of \textbf{\ours}. By modeling each agent's search process as a random walk with heuristic teleportation over a graph-structured solution subspace, we show that the adaptive dual-memory mechanism guarantees global reachability and yields superior search efficiency in terms of asymptotic regret.

\subsection{Global Reachability}
\label{app:global_reachability}

We formalize the search process of each agent as a random walk over its graph-structured solution subspace. Since agents make decisions independently conditioned on their local context, it suffices to analyze a single agent. For agent \(a_m\), let
\[
\mathcal{G}_m=(V_m,\mathcal{E}_m)
\]
denote its local solution graph, where each node in \(V_m\) represents a candidate solution strategy and each edge in \(\mathcal{E}_m\) encodes a local similarity relation between strategies.

During execution, agent \(a_m\) encounters a sequence of local subproblems
\[
\mathcal{Q}_m=\{q_{m,1},q_{m,2},\dots,q_{m,L_m}\},
\]
where \(q_{m,\ell}\) denotes the \(\ell\)-th subproblem assigned to \(a_m\). Let
\[
p_{m,\ell}\in\Delta^{|V_m|}
\]
denote the state distribution of agent \(a_m\) when solving \(q_{m,\ell}\), where \(\Delta^{|V_m|}\) is the probability simplex over \(V_m\).

In \textbf{\ours}, the two memory pools induce two complementary search operators. The exploitation pool (E-pool) defines a local walk governed by a column-stochastic transition matrix
\[
T_m\in\mathbb{R}^{|V_m|\times |V_m|},
\]
where \((T_m)_{ij}\) denotes the probability of moving from state \(j\) to state \(i\) according to historical similarity. The exploration pool (X-pool) defines a teleportation mechanism induced by an LLM prior
\[
h_m\in\Delta^{|V_m|},
\qquad
(h_m)_i>0,\ \forall i\in V_m.
\]
The resulting transition when moving from subproblem \(q_{m,\ell}\) to \(q_{m,\ell+1}\) is
\begin{equation}
p_{m,\ell+1}
=
M_{m,\ell}p_{m,\ell}
=
\left(
\alpha_{m,\ell}T_m
+
(1-\alpha_{m,\ell})\,h_m\mathbf{1}^{\top}
\right)p_{m,\ell},
\label{eq:appendix_global_transition}
\end{equation}
where
\begin{equation}
\alpha_{m,\ell}
=
\frac{w_{m,\mathrm{E\mbox{-}pool},\ell}}
{w_{m,\mathrm{E\mbox{-}pool},\ell}+w_{m,\mathrm{X\mbox{-}pool},\ell}}
\in [0,1).
\label{eq:appendix_alpha}
\end{equation}

Equation~\eqref{eq:appendix_global_transition} shows that \textbf{\ours} combines local exploitation with global exploration. The term \(\alpha_{m,\ell}T_m\) preserves the local geometry induced by past experience, while the rank-one term \((1-\alpha_{m,\ell})h_m\mathbf{1}^{\top}\) injects nonzero probability mass into every feasible state.

\begin{theorem}[Global Reachability]
\label{thm:global_reachability}
Assume that the exploration pool remains active throughout the search process, i.e.,
\[
w_{m,\mathrm{X\mbox{-}pool},\ell}>0
\qquad
\text{for all } \ell,
\]
so that \(\alpha_{m,\ell}<1\), and assume that the LLM prior has full support over the feasible state space:
\[
(h_m)_i>0,
\qquad
\forall i\in V_m.
\]
Then the transition matrix
\[
M_{m,\ell}
=
\alpha_{m,\ell}T_m
+
(1-\alpha_{m,\ell})h_m\mathbf{1}^{\top}
\]
is strictly positive. Consequently, the induced Markov chain is irreducible and aperiodic. In particular, from any initial state, the search process can reach any region of the solution subspace with nonzero probability.
\end{theorem}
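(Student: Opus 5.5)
The plan is to prove entrywise positivity of \(M_{m,\ell}\) directly, and then read off irreducibility and aperiodicity from the standard definitions, working in the column-stochastic convention fixed in \eqnref{eq:appendix_global_transition}.

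\textbf{Step 1 (positivity).} For any \(i,j\in V_m\), write
\[
(M_{m,\ell})_{ij}=\alpha_{m,\ell}(T_m)_{ij}+(1-\alpha_{m,\ell})(h_m)_i ,
\]
using \((h_m\mathbf{1}^\top)_{ij}=(h_m)_i\). Since \(T_m\) is a transition matrix, \((T_m)_{ij}\ge 0\), and \(\alpha_{m,\ell}\ge 0\) by \eqref{eq:appendix_alpha}. Hence the first term is nonnegative. The hypothesis \(w_{m,\mathrm{X\mbox{-}pool},\ell}>0\), together with a nonnegative E-pool weight, gives \(\alpha_{m,\ell}<1\). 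Combined with \((h_m)_i>0\), this makes the second term strictly positive. So \((M_{m,\ell})_{ij}>0\) for all \(i,j\).

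I will also check that \(M_{m,\ell}\) is column-stochastic, so that it really is a transition matrix:
\[
\mathbf{1}^\top M_{m,\ell}=\alpha_{m,\ell}\mathbf{1}^\top+(1-\alpha_{m,\ell})(\mathbf{1}^\top h_m)\mathbf{1}^\top=\mathbf{1}^\top .
\]

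\textbf{Step 2 (consequences).} Irreducibility is immediate: every state reaches every other state in one step with probability \((M_{m,\ell})_{ij}>0\). Aperiodicity follows because each diagonal entry is positive, so every return-time set contains \(1\) and its gcd is \(1\).

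\textbf{Step 3 (time inhomogeneity).} Because \(\alpha_{m,\ell}\) varies with \(\ell\), the chain is time-inhomogeneous, and I will state the reachability claim accordingly. From any initial distribution \(p_{m,\ell}\), the next distribution satisfies
\[
(p_{m,\ell+1})_i\ge(1-\alpha_{m,\ell})(h_m)_i>0
\]
for every \(i\). So every state of \(\mathcal{G}_m\), and hence every region of it, is hit with nonzero probability after a single transition. This holds at every step, not just in the limit, so the "irreducible and aperiodic" statement applies to each \(M_{m,\ell}\) individually, and global reachability holds for the inhomogeneous product as well.

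\textbf{Main obstacle.} There is no real technical difficulty. The only care needed is in interpreting "irreducible and aperiodic" for a time-varying chain, which Step 3 handles via the per-step lower bound. It is also worth noting that no uniform bound \(\alpha_{m,\ell}\le\bar\alpha<1\) is required for reachability; such a bound would be needed only for uniform mixing rates.
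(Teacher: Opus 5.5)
Your proof is correct and follows the paper's argument. You show entrywise positivity from $(M_{m,\ell})_{ij}=\alpha_{m,\ell}(T_m)_{ij}+(1-\alpha_{m,\ell})(h_m)_i$ using $(T_m)_{ij}\ge 0$, $\alpha_{m,\ell}<1$ and $(h_m)_i>0$, and then deduce irreducibility and aperiodicity because the matrix is strictly positive and stochastic. Your additional checks are sound refinements that the paper leaves implicit: the column-stochasticity computation, and the per-step bound $(p_{m,\ell+1})_i\ge(1-\alpha_{m,\ell})(h_m)_i$, which handles the time-varying $\alpha_{m,\ell}$.
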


\begin{proof}
For any pair of states \((i,j)\), the \((i,j)\)-th entry of \(M_{m,\ell}\) is
\begin{equation}
(M_{m,\ell})_{ij}
=
\alpha_{m,\ell}(T_m)_{ij}
+
(1-\alpha_{m,\ell})(h_m)_i.
\label{eq:appendix_entry}
\end{equation}
Since \((T_m)_{ij}\ge 0\), \(\alpha_{m,\ell}<1\), and \((h_m)_i>0\), we have
\[
(M_{m,\ell})_{ij}>0,
\qquad
\forall i,j.
\]
Hence \(M_{m,\ell}\) is a strictly positive stochastic matrix.

Any strictly positive stochastic matrix is irreducible and aperiodic. Therefore, the Markov chain induced by \(M_{m,\ell}\) admits a unique stationary distribution with strictly positive support on all states. Equivalently, for any two states in \(V_m\), there exists a path of nonzero probability connecting them under the dynamics of \textbf{\ours}. Thus, the search process cannot be permanently trapped in a local absorbing region and remains globally reachable over the entire solution subspace \(\mathcal{G}_m\).
\end{proof}

\begin{remark}[Boundary Cases]
The condition \(\alpha_{m,\ell}\in(0,1)\) is essential.

\textbf{Pure exploitation (\(\alpha_{m,\ell}=1\)).}
In this case, the transition reduces to
\[
M_{m,\ell}=T_m.
\]
If \(T_m\) contains a closed communicating class corresponding to a locally consistent but suboptimal region, then the chain is reducible. Once the process enters that region, the escape probability is zero, and global reachability is lost.

\textbf{Pure exploration (\(\alpha_{m,\ell}=0\)).}
In this case, the transition reduces to
\[
M_{m,\ell}=h_m\mathbf{1}^{\top},
\qquad
p_{m,\ell+1}=h_m.
\]
The chain remains ergodic, but the dynamics become memoryless: the next state no longer depends on the current state or the local geometry encoded by \(T_m\). Hence the process loses the ability to accumulate and refine historical experience.

These two boundary cases clarify why \textbf{\ours} requires an adaptive mixture of the two operators: E-pool alone may get trapped locally, whereas X-pool alone discards structured reuse. Their combination yields both global reachability and meaningful self-evolution.
\end{remark}

\subsection{Superior Search Efficiency}
\label{app:log_regret}

We now establish the search efficiency of \textbf{\ours}. We show that its online routing mechanism achieves logarithmic cumulative regret, which is order-optimal, and asymptotically outperforms any fixed routing policy.

\begin{theorem}[Logarithmic Regret]
\label{theorem: reg}
Under Assumption\ref{assumption function r}, the routing policy of \textbf{\ours} satisfies
\[
\mathbb{E}[R(T)] = O(\log T).
\]
Hence, \textbf{\ours} achieves \(O(\log T)\) cumulative regret.
\end{theorem}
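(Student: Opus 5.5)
The plan is to treat the router's update as a stochastic-approximation recursion on $\alpha_t$ and to convert the resulting mean-square convergence rate into a regret bound through the curvature in Assumption~\ref{assumption function r}.

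\textbf{Step 1: regret is controlled by squared error.} Since $r$ is $C^2$ on $[0.5,1]$ with an interior maximizer $\alpha^\star$, we have $r'(\alpha^\star)=0$. A second-order Taylor expansion with $L=\max|r''|$ then gives
\[
r(\alpha^\star)-r(\alpha_t)\le \tfrac{L}{2}(\alpha_t-\alpha^\star)^2 .
\]
Hence
\[
\mathbb{E}[R(T)]\le \tfrac{L}{2}\sum_{t=1}^T \mathbb{E}\bigl[(\alpha_t-\alpha^\star)^2\bigr],
\]
and it suffices to show $\mathbb{E}[(\alpha_t-\alpha^\star)^2]=O(1/t)$, because $\sum_t 1/t=O(\log T)$.

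\textbf{Step 2: the update as a stochastic-approximation step.} Because $w_{\mathrm{X}}=1$ is fixed, $\alpha_t=w_{\mathrm{E}}/(w_{\mathrm{E}}+1)$, and the floor at $1.0$ keeps $\alpha_t\ge 0.5$, which is why the domain in the assumption is $[0.5,1]$. I will write the step as
\[
\alpha_{t+1}=\Pi_{[0.5,1]}\bigl(\alpha_t+\eta_t\,(g(\alpha_t)+\zeta_t)\bigr),
\]
where $\zeta_t$ is a martingale-difference noise term coming from the Bernoulli judge signal $\Delta_t$. The drift is modeled as $g(\alpha)\propto r'(\alpha)$, with the following justification. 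An E-pool success raises $w_{\mathrm{E}}$ and an X-pool success lowers it, so the expected increment is a positive multiple of $r_{\mathrm{E}}-r_{\mathrm{X}}$ plus correction terms, and that quantity has the sign of $r'(\alpha)$ when the pool rewards depend only weakly on $\alpha$.

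I will make this identification an explicit modeling step. Concretely, I assume the effective step size satisfies $\eta_t=c/t$ with $c\,\mu>1/2$, where $\mu=\min(-r'')>0$ comes from strict concavity. The justification for the $1/t$ scale is that $d\alpha/dw_{\mathrm{E}}=(1+w_{\mathrm{E}})^{-2}$ shrinks as the weight grows, which yields a diminishing step.

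\textbf{Step 3: the mean-square recursion.} Using non-expansiveness of the projection, $g(\alpha)(\alpha-\alpha^\star)\le-\mu(\alpha-\alpha^\star)^2$ (from strong concavity), and bounded noise variance $\sigma^2$, set $e_t=\mathbb{E}[(\alpha_t-\alpha^\star)^2]$. This gives
\[
e_{t+1}\le(1-2c\mu/t)\,e_t+c^2\sigma'^2/t^2 .
\]
The standard Chung-type lemma then yields $e_t\le C/t$, and Step 1 finishes the proof with constant $\tfrac{L}{2}C$.

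\textbf{Main obstacle.} The hard part is Step 2: rigorously linking the additive/multiplicative weight update, which has fixed increments $\alpha=\beta=0.5$, to an unbiased gradient-type step with a $1/t$ schedule. Fixed increments on $w_{\mathrm{E}}$ do not by themselves produce decaying steps in $\alpha$ near an interior $\alpha^\star$, because $w_{\mathrm{E}}$ stays bounded there. My first attempt would be to posit, as the theory implicitly requires, a Robbins–Monro-style schedule with mean-field drift $\propto r'(\alpha)$ and state it as a regularity assumption. Failing that, I would fall back to bounding the regret of a stationary fluctuation around $\alpha^\star$. That fallback only gives linear regret with a small constant, so the assumption route is the one I would commit to.
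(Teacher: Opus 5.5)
\paragraph{Verdict: genuine gap at Step 2, shared by the paper's proof.}
Your skeleton is the paper's skeleton: put the router in Robbins--Monro form with $1/\ell$ steps, use a contraction inequality $g(\alpha)(\alpha-\alpha^\star)\le-\lambda(\alpha-\alpha^\star)^2$ to get $\mathbb{E}[(\alpha_\ell-\alpha^\star)^2]=O(1/\ell)$, bound the per-step gap quadratically, and sum the harmonic series. Your two outer steps are, if anything, the cleaner versions. The quadratic upper bound really does come from $r'(\alpha^\star)=0$ and bounded $r''$; strong concavity, which the paper invokes there, gives the reverse inequality. An $O(1/t)$ rate also needs a condition like your $2c\mu>1$, which the paper never states. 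The trouble is Step 2, and it is a genuine gap, not a formality. It is also where the paper's proof is unsupported: the paper obtains this step only by asserting that concavity of $r$ makes the mean dynamics contractive around $\alpha^\star$, and by an unexplained ``standard time rescaling''.

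\paragraph{Why Step 2 fails for the router as specified.}
Both ingredients fail for the stated update rule.
\begin{itemize}
\item \textbf{The drift is not aligned with $r'$.} Ignoring the floor, the expected increment of $w_{\mathrm{E}}$ is $\tfrac12 q(\alpha)-\tfrac12 w_{\mathrm{E}}\bigl(1-q(\alpha)\bigr)$, where $q(\alpha)=\alpha r_{\mathrm{E}}(\alpha)+(1-\alpha)\bigl(1-r_{\mathrm{X}}(\alpha)\bigr)$. This vanishes where $\alpha=q(\alpha)$, not where $r'(\alpha)=0$. It is also not a positive multiple of $r_{\mathrm{E}}-r_{\mathrm{X}}$, contrary to your heuristic. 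For example, take $r_{\mathrm{X}}\equiv\tfrac12$ and $r(\alpha)=\tfrac12-(\alpha-\alpha^\star)^2$, which satisfies Assumption~1. Then $q(\alpha^\star)=\tfrac12$, and the drift at $\alpha^\star$ is $(1-w^\star)/4<0$ with $w^\star=\alpha^\star/(1-\alpha^\star)>1$. So the router is pushed away from $\alpha^\star$, toward the floor $\alpha=\tfrac12$.
\item \textbf{The step size does not decay.} Even where the drift does vanish, the jumps $\phi_+(\alpha)-\alpha=(1-\alpha)^2/(3-\alpha)$ and $\phi_-(\alpha)-\alpha=-\alpha(1-\alpha)/(2-\alpha)$ do not depend on $\ell$. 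No rescaling of time turns this into a decreasing-step scheme. Whenever $q$ stays bounded away from $0$ and $1$, the gap $\mathbb{E}[r(\alpha^\star)-r(\alpha_\ell)]$ stays bounded away from zero, so the regret is linear, just as your fallback anticipates.
\end{itemize}
Your argument therefore establishes $O(\log T)$ only for a different router (step $c/t$, drift aligned with $r'$), under hypotheses beyond Assumption~1. No argument from Assumption~1 alone can close this step for the update rule as stated.

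\paragraph{A smaller hidden hypothesis.}
Strict concavity does not give $\min(-r'')>0$, since it permits $r''(\alpha^\star)=0$. Your $\mu$ therefore needs the additional assumption $r''(\alpha^\star)<0$. The paper makes the same tacit upgrade when it appeals to strong concavity.
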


\begin{proof}
We first analyze the online router introduced in Section~\ref{sec:dual}. For notational simplicity, we suppress the agent index \(m\) when no ambiguity arises. Recall that
\begin{equation}
\alpha_{\ell}
=
\frac{w_{\ell}}{w_{\ell}+1},
\label{eq:appendix_alpha_def}
\end{equation}
where
\[
w_{\ell}:=w_{m,\mathrm{E\mbox{-}pool},\ell},
\qquad
w_{m,\mathrm{X\mbox{-}pool},\ell}=1.
\]
Thus, \(\alpha_{\ell}\in[0.5,1)\) is the probability of selecting the E-pool when solving the \(\ell\)-th local subproblem.

At local step \(\ell\), the router samples
\[
I_{\ell}\in\{0,1\},
\]
where \(I_{\ell}=1\) denotes selecting the E-pool and \(I_{\ell}=0\) denotes selecting the X-pool. The routing probabilities are
\[
\mathbb{P}(I_{\ell}=1\mid\mathcal{F}_{\ell})=\alpha_{\ell},
\qquad
\mathbb{P}(I_{\ell}=0\mid\mathcal{F}_{\ell})=1-\alpha_{\ell},
\]
where \(\mathcal{F}_{\ell}\) denotes the filtration generated by the routing history up to step \(\ell\). After the routing decision, the system observes a binary reward
\[
R_{\ell}\in\{0,1\},
\]
indicating whether the stage-wise score improves.

Let \(r_{\mathrm{E}}(\alpha)\) and \(r_{\mathrm{X}}(\alpha)\) denote the expected rewards of the E-pool and X-pool under routing probability \(\alpha\), respectively. The induced expected reward is
\begin{equation}
r(\alpha)
=
\alpha\,r_{\mathrm{E}}(\alpha)
+
(1-\alpha)\,r_{\mathrm{X}}(\alpha),
\qquad
\alpha\in[0.5,1).
\label{eq:appendix_reward_alpha}
\end{equation}
Under Assumption~\ref{assumption function r}, \(r\) is strictly concave and admits a unique maximizer
\[
\alpha^\star\in(0.5,1).
\]

\paragraph{One-step recursion.}
Ignoring the projection \(\max(1,\cdot)\) for asymptotic analysis, the update rule in Section~4.3 becomes
\begin{equation}
w_{\ell+1}
=
\begin{cases}
w_{\ell}+\frac{1}{2},
& \text{if } (I_{\ell}=1,R_{\ell}=1)\ \text{or}\ (I_{\ell}=0,R_{\ell}=0),\\[4pt]
\frac{1}{2}w_{\ell},
& \text{if } (I_{\ell}=1,R_{\ell}=0)\ \text{or}\ (I_{\ell}=0,R_{\ell}=1).
\end{cases}
\label{eq:appendix_w_update}
\end{equation}
Using \eqref{eq:appendix_alpha_def}, we obtain
\begin{equation}
\alpha_{\ell+1}
=
\begin{cases}
\phi_{+}(\alpha_{\ell}) := \dfrac{1+\alpha_{\ell}}{3-\alpha_{\ell}},
& \text{if } (I_{\ell}=1,R_{\ell}=1)\ \text{or}\ (I_{\ell}=0,R_{\ell}=0),\\[10pt]
\phi_{-}(\alpha_{\ell}) := \dfrac{\alpha_{\ell}}{2-\alpha_{\ell}},
& \text{if } (I_{\ell}=1,R_{\ell}=0)\ \text{or}\ (I_{\ell}=0,R_{\ell}=1).
\end{cases}
\label{eq:appendix_alpha_update}
\end{equation}

Define
\begin{equation}
q(\alpha)
:=
\alpha\,r_{\mathrm{E}}(\alpha)
+
(1-\alpha)\bigl(1-r_{\mathrm{X}}(\alpha)\bigr),
\label{eq:appendix_q_alpha}
\end{equation}
which is the probability of taking the first branch in \eqref{eq:appendix_alpha_update}. Then
\begin{equation}
\mathbb{E}[\alpha_{\ell+1}-\alpha_{\ell}\mid\mathcal{F}_{\ell}]
=
g(\alpha_{\ell}),
\label{eq:appendix_drift}
\end{equation}
where
\begin{equation}
g(\alpha)
=
q(\alpha)\!\left(\frac{1+\alpha}{3-\alpha}-\alpha\right)
+
\bigl(1-q(\alpha)\bigr)\!\left(\frac{\alpha}{2-\alpha}-\alpha\right).
\label{eq:appendix_g_alpha}
\end{equation}

Since \(r\) is strictly concave with unique maximizer \(\alpha^\star\), the induced mean dynamics are locally contractive around \(\alpha^\star\). In particular, there exists \(\lambda>0\) such that
\begin{equation}
g(\alpha)(\alpha-\alpha^\star)
\le
-\lambda(\alpha-\alpha^\star)^2
\label{eq:appendix_contraction}
\end{equation}
for all \(\alpha\) in a neighborhood of \(\alpha^\star\).

\paragraph{Stochastic approximation.}
After standard time rescaling, \eqref{eq:appendix_alpha_update} can be written in Robbins--Monro form:
\begin{equation}
\alpha_{\ell+1}
=
\alpha_{\ell}
+
\frac{1}{\ell}g(\alpha_{\ell})
+
\frac{1}{\ell}\xi_{\ell+1},
\label{eq:appendix_rm}
\end{equation}
where \(\ell\) denotes the update scaling parameter, and \(\{\xi_{\ell}\}\) is a martingale difference sequence satisfying
\[
\mathbb{E}[\xi_{\ell+1}\mid\mathcal{F}_{\ell}] = 0,
\qquad
\sup_{\ell}\mathbb{E}\!\left[\|\xi_{\ell+1}\|^2\mid\mathcal{F}_{\ell}\right] < \infty
\quad \text{a.s.}
\]
By standard stochastic approximation results \citep[Theorem~2.2]{borkar2024stochastic},
\begin{equation}
\mathbb{E}\bigl[(\alpha_{\ell}-\alpha^\star)^2\bigr]
=
O(1/\ell).
\label{eq:appendix_mse}
\end{equation}

By strong concavity of \(r\), there exists \(\mu>0\) such that
\begin{equation}
r(\alpha^\star)-r(\alpha)
\le
\mu(\alpha-\alpha^\star)^2,
\qquad
\forall \alpha\in[0.5,1).
\label{eq:appendix_concavity}
\end{equation}
Therefore,
\begin{equation}
\mathbb{E}[R(T)]
\le
\mu\sum_{\ell=1}^T
\mathbb{E}\bigl[(\alpha_{\ell}-\alpha^\star)^2\bigr].
\label{eq:appendix_regret_reduce}
\end{equation}
Substituting \eqref{eq:appendix_mse} into \eqref{eq:appendix_regret_reduce} yields
\[
\mathbb{E}[R(T)]
\le
\mu\sum_{\ell=1}^T O(1/\ell)
=
O(\log T).
\]
\end{proof}

The cumulative regret over \(T\) local decisions is
\begin{equation}
R(T)
=
\sum_{\ell=1}^T
\Bigl(
r(\alpha^\star)-\mathbb{E}[r(\alpha_{\ell})]
\Bigr).
\label{eq:appendix_regret_alpha}
\end{equation}
Since stochastic bandit lower bounds imply that \(\Omega(\log T)\) regret is unavoidable in general \citep{Auer2002FinitetimeAO}, an \(O(\log T)\) upper bound is order-optimal.

\begin{corollary}[Fixed Routing Is Suboptimal]
For any fixed routing probability \(\bar{\alpha}\neq\alpha^\star\), the cumulative regret is linear:
\[
R_{\mathrm{fixed}}(T)=\Theta(T).
\]
Consequently, \textbf{\ours} is asymptotically superior to any fixed routing policy.
\end{corollary}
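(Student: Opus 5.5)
The approach is to compare a fixed routing probability $\bar{\alpha}$ directly against the optimum $\alpha^\star$ using Assumption~\ref{assumption function r}, without any stochastic-approximation machinery. With the router frozen at $\bar{\alpha}$, every round has the same expected reward $r(\bar{\alpha})$. The cumulative regret in \eqref{eq:appendix_regret_alpha} therefore collapses to the deterministic sum
\[
R_{\mathrm{fixed}}(T)=\sum_{\ell=1}^{T}\bigl(r(\alpha^\star)-r(\bar{\alpha})\bigr)=T\,\Delta(\bar{\alpha}),
\qquad
\Delta(\bar{\alpha}):=r(\alpha^\star)-r(\bar{\alpha}).
\]
The proof then reduces to a single claim: $\Delta(\bar{\alpha})$ is a strictly positive constant independent of $T$.

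For the lower bound, $\alpha^\star$ is the unique maximizer of $r$, so $r(\bar{\alpha})<r(\alpha^\star)$ whenever $\bar{\alpha}\neq\alpha^\star$. This gives $\Delta(\bar{\alpha})>0$ and hence $R_{\mathrm{fixed}}(T)\ge \Delta(\bar{\alpha})\,T=\Omega(T)$. If one wants the gap quantified, strict concavity together with $r''$ bounded away from zero on the compact interval gives $\Delta(\bar{\alpha})\ge \kappa(\bar{\alpha}-\alpha^\star)^2$ for some $\kappa>0$.

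For the upper bound, $r$ takes values in $[0,1]$, so $\Delta(\bar{\alpha})\le 1$ and $R_{\mathrm{fixed}}(T)\le T$. Together the two bounds give $R_{\mathrm{fixed}}(T)=\Theta(T)$.

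The main obstacle is scope. Assumption~\ref{assumption function r} defines $r$ only on $[0.5,1]$, but the ablation's fixed policies include $\bar{\alpha}=0$. I would first restrict the corollary to $\bar{\alpha}\in[0.5,1]$, which is the range the router can actually reach. I would also note that the extension to $\bar{\alpha}<0.5$ holds whenever $r$ extends continuously there with $\alpha^\star$ still its unique global maximizer. In that case the same constant-gap argument applies verbatim, with compactness used to keep $\Delta$ bounded away from zero.

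Finally, I would compare this with Theorem~\ref{theorem: reg}. That theorem gives $\mathbb{E}[R(T)]=O(\log T)$ for the online router, and $O(\log T)/\Theta(T)\to 0$. The online router is therefore asymptotically superior to every fixed routing policy.
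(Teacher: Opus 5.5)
Your proposal is correct and is essentially the paper's argument: with $\bar{\alpha}$ frozen, each round incurs the same gap $r(\alpha^\star)-r(\bar{\alpha})$, which is strictly positive because $\alpha^\star$ is the unique maximizer, so the regret is exactly that constant times $T$, and comparing with the $O(\log T)$ bound for the online router gives a ratio tending to zero. Your remark that the claim should be restricted to $\bar{\alpha}\in[0.5,1]$, the interval on which the assumption actually defines $r$, is a legitimate scoping point that the paper leaves implicit.
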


\begin{proof}
Let \(\bar{\alpha}\neq\alpha^\star\) be fixed. Since \(r\) is strictly concave and attains its unique maximum at \(\alpha^\star\), the per-step gap
\[
\delta:=r(\alpha^\star)-r(\bar{\alpha})
\]
is a strictly positive constant. Therefore,
\[
R_{\mathrm{fixed}}(T)
=
\sum_{\ell=1}^T \delta
=
\delta T
=
\Theta(T).
\]
By Theorem~\ref{theorem: reg}, \textbf{\ours} satisfies
\[
\mathbb{E}[R(T)] = O(\log T).
\]
Hence,
\[
\lim_{T\to\infty}
\frac{\mathbb{E}[R(T)]}{R_{\mathrm{fixed}}(T)}
=0,
\]
which implies the claim.
\end{proof}

\begin{remark}
In particular, the balanced fixed routing choice \(\bar{\alpha}=0.5\), which corresponds to assigning equal selection probability to the two memory pools at every step, is a special case of a fixed policy and is therefore asymptotically dominated by \textbf{\ours}.
\end{remark}

\section{Experiment Details}
\subsection{Baseline Setup}

In this section, we provide detailed descriptions of the centralized memory baselines used in our comparisons.

\textbf{MetaGPT.}
This memory design is derived from MetaGPT~\citep{Hong2023MetaGPTMP} and focuses exclusively on within-trial memory, i.e., information stored and shared internally during the resolution of a single task by multiple agents.

\textbf{ChatDev.}
This memory design is adapted from ChatDev~\citep{qian2024chatdev}, which incorporates both within-trial and cross-trial memory. The within-trial memory is passed from the central or initiating agent at the beginning of each round to guide subsequent interactions based on prior context. The cross-trial memory is relatively simple, storing past solutions to previous queries for future retrieval.

\textbf{G-Memory.}
This memory design is directly adopted from G-Memory~\citep{zhang2025g}, which manages long multi-agent interaction histories through a three-tier graph hierarchy consisting of insight, query, and interaction graphs.

\subsection{Multi-agent system setup}

In this section, we detail the setups of our three adopted MAS framework, AgentNet, DyLAN and AutoGen.

\textbf{AgentNet.}
AgentNet~\cite{Yang2025AgentNetDE} is a decentralized multi-agent framework characterized by unstructured and opportunistic coordination, where collaboration is not governed by a fixed workflow or a central controller, but instead emerges dynamically from local agent decisions. Specifically, each agent is able to decide whether to forward a task, decompose it into subtasks, or execute it locally based on the current task state and its own capability. It then performs the assigned subtask and produces task-specific outputs. This flexible structure enables scalable collaboration without relying on predefined interaction patterns. In our implementation of AgentNet, three agents participate in a three-stage collaboration process.

\textbf{DyLAN.}
DyLAN~\cite{liu2023dynamic} is a debate-style framework similar to LLM-Debate, but it incorporates a more efficient agent-wise early-stopping mechanism during multi-turn interactions. DyLAN further employs an agent selection algorithm based on an unsupervised metric, namely the Agent Importance Score, to identify the most contributive agents through a preliminary trial tailored to the target task. In our implementation of DyLAN, three agents participate in the debate, while an additional ranker agent evaluates their relative importance.

\textbf{AutoGen.}
AutoGen~\cite{wu2024autogen} is a widely used multi-agent orchestration framework for coordinating interactions among specialized agents in problem-solving tasks. Specifically, we adopt its A3: Decision Making structure, which consists of: (1) a Solver Agent, responsible for generating solutions and initialized with the system prompt ``You are a smart agent designed to solve problems.''; (2) a Ground Truth Agent, which critically evaluates the solver's output and identifies potential errors against a reference standard; and (3) an Executor Agent, which translates validated solutions into executable commands. This modular design enables transparent, verifiable, and actionable multi-agent collaboration.

\section{Additional Experiment Result}

\subsection{Case Study}

In this section, we use G-Memory as a representative centralized memory baseline on the AutoGen framework to explain why centralized memory incurs higher token cost. The main overhead comes from three stages: memory retrieval, memory reading, and memory update.

In centralized memory, all agents access a shared memory repository. When a new task arrives, each active agent retrieves relevant memory from the global memory space, which contains memory pieces accumulated by the entire multi-agent system. Therefore, even if only a few memory pieces are finally selected, the retrieval process still operates over a large and heterogeneous repository containing experiences from different agents and roles. In contrast, \ours decentralizes memory at the agent level. Each agent maintains its own private dual-memory pools, consisting of an exploitation pool and an exploration pool. During memory retrieval, an agent only accesses its own local memory space, which is much smaller and more role-specific. This reduces irrelevant retrieval and avoids repeatedly exposing agents to memory pieces generated by other agents.

The same difference also appears in memory reading and update. In centralized memory, agents may repeatedly read overlapping global memories, and after task completion the system needs to summarize and update the shared repository. In \ours, each agent reads only its personalized memory and updates only its own dual-memory pools. Thus, no system-wide memory synchronization or global memory rewriting is required.

Overall, the high token cost of centralized memory mainly comes from its global access pattern: agents retrieve, read, and update memory against the whole shared repository. By localizing these operations to agent-private dual-memory pools, \ours reduces token consumption while preserving role-specific experience.

\newtcolorbox{decentcasebox}[1]{
    enhanced,
    breakable,
    colback=white,
    colframe=decentBorder,
    colbacktitle=decentTitle,
    coltitle=black,
    fonttitle=\bfseries\large,
    title={#1},
    boxrule=0.8pt,
    arc=2pt,
    left=6pt,
    right=6pt,
    top=6pt,
    bottom=6pt,
    toptitle=6pt,
    bottomtitle=6pt,
    lefttitle=8pt,
    righttitle=8pt,
    before skip=10pt,
    after skip=12pt,
    borderline west={2pt}{0pt}{decentBorder}
}

\newcommand{\casebar}[1]{%
    \par\vspace{0.45em}
    \noindent
    \colorbox{decentBar}{%
        \parbox{\dimexpr\linewidth-2\fboxsep\relax}{%
            \bfseries\small\color{caseText} #1%
        }%
    }%
    \par\vspace{0.25em}
}

\newenvironment{casecontent}{
    \par\noindent
    \begingroup
    \leftskip=5pt
    \rightskip=5pt
    \vspace{4pt}
    \small
    \setlength{\parindent}{0pt}
    \setlength{\parskip}{3pt}
}{
    \par
    \vspace{4pt}
    \endgroup
}

\DefineVerbatimEnvironment{greenlog}{Verbatim}{
    fontsize=\scriptsize,
    breaklines=true,
    breakanywhere=true,
    commandchars=\\\{\},
    frame=single,
    framesep=4pt,
    rulecolor=\color{decentBorder},
    formatcom=\color{decentLogText}
}

\begin{decentcasebox}{Case Study (DecentMem)}

\casebar{Task}
\begin{casecontent}
It is not always easy to see which chemicals are contained in our consumer products.
The following argument pertains to this question. First, vitamin A is an ingredient of
LIQUID EYELINER. Second, every ingredient of BC Eye Cream 15 ml is not an ingredient
of Mango Lip Butter or not an ingredient of LIQUID EYELINER. Therefore, it is not the
case that vitamin A is an ingredient of BC Eye Cream 15 ml.

Is the argument, given the explicitly stated premises, deductively valid or invalid?

\vspace{0.4em}
\textbf{[Task Manager]} \\
Now I have task received: determine whether the chemical-ingredient argument is
deductively valid or invalid.

\vspace{0.4em}
\textbf{[Online router | Dual Memory Pool Access]} \\
Memory pattern: DecentMem

\vspace{0.4em}
Query: \\
``BBH deductive validity; natural-language logic; ingredient relation; disjunction;
valid/invalid option format''
\end{casecontent}

\casebar{Memory Search Process}
\begin{casecontent}
\begin{greenlog}
Accessed pool:
agent_id = 0
stage_pool = agent_0_stage_pool["1"]
E_pool_weight = 1.50
X_pool_weight = 1.00

Online router: E-pool
Candidate local memories:
AP-f1-old-09 | sim=0.74 | prior task: "check whether conclusion follows only from stated premises"
AP-f1-old-12 | sim=0.46 | prior task: "multiple-choice answer must be one of the given options"
AP-f1-new-03 | sim=0.59 | recent note: "avoid importing external product knowledge"

Selected compact memory packet for Solver Agent 0:
\end{greenlog}
\end{casecontent}

\casebar{Memory Reading Process}
\begin{casecontent}
\begin{greenlog}
MemoryFragment(
  fragment_id = "agent_0_L1_mem_171642_old_49",
  stage = "1",
  pool_type = "old",
  similarity_to_current_task = 0.74,

  environment = {
    "problem": "Determine whether the argument is deductively valid or invalid.",
    "task_type": "bbh/chemical_logical_deduction",
    "answer_space": ["valid", "invalid"]
  },

  rule_definition = RuleDefinition(
    rule_name = "C1",
    description = "Initial-stage agent for checking whether the conclusion follows from the stated premises."
  ),

  action = ActionRecord(
    action_type = "direct_answer",
    direct_answer = "invalid",
    thought_summary = "Use only stated premises. If a counterexample exists, answer invalid."
  ),

  task_allocation = [],
  result_quality = 9.0,
  created = 18
)

prompt_injected_to_solver:
"Past experience: For deductive validity tasks, separate the premises from the conclusion,
use only the stated premises, and answer invalid if a counterexample is possible."
\end{greenlog}

\vspace{0.4em}
\textbf{[\#1] Solver Agent 0} \\
I receive the task plus the compact local memory packet.

\textbf{Output:} \\
invalid

......
\end{casecontent}

\casebar{Memory Update Process}
\begin{casecontent}
\begin{greenlog}
Agent Pool Local Write | After L1
Write a compact stage result to agent_0_stage_pool["1"].E_pool:

memory_type = "deductive_validity_short_trace"
action = "initial_answer"
answer = "invalid"
support = "uses premise"

Agent Pool Weight Update | L1
Update only the local L1 pool. The retrieved strategy receives a higher local weight because it supported a high-scoring answer. No centralized memory merge is triggered.

Agent Dual Pool Memory Update:
Store compact local traces in the assigned agents' stage-specific pools.
\end{greenlog}
\end{casecontent}

\end{decentcasebox}

\newcommand{\gmemsection}[1]{%
    \par\vspace{0.45em}
    \noindent
    \colorbox{gmemBar}{%
        \parbox{\dimexpr\linewidth-2\fboxsep\relax}{%
            \bfseries\small\color{caseText} #1%
        }%
    }%
    \par\vspace{0.25em}
}

\newenvironment{redlog}{%
  \VerbatimEnvironment
  \begingroup
  \color{gmemLogText}%
  \renewcommand{\FancyVerbFormatLine}[1]{\textcolor{gmemLogText}{##1}}%
  \begin{Verbatim}[
    fontsize=\scriptsize,
    breaklines=true,
    breakanywhere=true,
    frame=single,
    framesep=4pt,
    rulecolor=\color{gmemBorder},
    fillcolor=\color{gmemLogBg},
    formatcom=\color{gmemLogText}
  ]%
}{%
  \end{Verbatim}%
  \endgroup
}

\DefineVerbatimEnvironment{blacklog}{Verbatim}{
    fontsize=\scriptsize,
    breaklines=true,
    breakanywhere=true
}

\begin{gmemorycasebox}{Case Study (G-Memory)}

\gmemsection{Task}
\begin{gmemcontent}
It is not always easy to see which chemicals are contained in our consumer products.
The following argument pertains to this question. First, vitamin A is an ingredient of
LIQUID EYELINER. Second, every ingredient of BC Eye Cream 15 ml is not an ingredient
of Mango Lip Butter or not an ingredient of LIQUID EYELINER. Therefore, it is not the
case that vitamin A is an ingredient of BC Eye Cream 15 ml.

\vspace{0.4em}
Is the argument, given the explicitly stated premises, deductively valid or invalid?

\vspace{0.6em}
\textbf{[Task Manager]} \\
Now BBH task received: determine whether the chemical-ingredient argument is
deductively valid or invalid.

\vspace{0.6em}
\textbf{[G-Memory Controller | Centralized Memory Pool Access]} \\
Memory pattern: G-Memory.

\vspace{0.6em}
\textbf{Query:} \\
``BBH deductive validity; natural-language logic; ingredient relation; disjunction;
valid/invalid option format''
\end{gmemcontent}

\gmemsection{Memory Search Process}
\begin{gmemcontent}
\begin{redlog}
Initialize task context:
task_id = bbh_logical_deduction_case
framework = AutoGen
stage = [solver, coordinator, executor]
answer_space = {valid, invalid}

Create centralized working state:
current_task_node
stage_message_buffer
retrieved_success_pool
retrieved_failure_pool
retrieved_insight_pool
agent_message_graph

[G-memory Query Builder]
Build a higher query from the whole task:

semantic keywords = "deductive validity, ingredient relation, universal quantifier, disjunction, countermodel"
structural pattern = "premise P(x); premise forall x, B(x)->not M(x) or not P(x); conclusion not B(a)"
expected reasoning tool = "counterexample / model construction"

[G-memory Retrieval | Search Multiple Central Pools]
Search spaces:
successful_trajectory_pool
failed_trajectory_pool
task_insight_pool
evaluator_feedback_pool
agent_message_graph

Retrieved successful memories:
GM-S-017 | sim=0.88 | weight=1.42 | "For validity tasks, construct a countermodel if the conclusion is not forced."
GM-S-042 | sim=0.84 | weight=1.31 | "Disjunction in a universal premise leaves multiple satisfying branches."
GM-S-068 | sim=0.80 | weight=1.18 | "Translate natural-language relations into predicates before deciding."

Retrieved failed memories:
GM-F-011 | sim=0.78 | weight=1.25 | failure: "Mistook A -> (not B or not C) as directly implying not B."
GM-F-028 | sim=0.73 | weight=1.10 | failure: "Imported external commonsense instead of using explicitly stated premises."

Retrieved task-level insights:
GM-I-008 | sim=0.88 | "A single satisfying countermodel is enough to prove invalidity."
GM-I-014 | sim=0.77 | "Final answer must be exactly one of the provided options."

[G-memory Maintenance Pass | Before Solver]
The centralized memory manager performs extra work:

1. Normalize retrieved records into a common schema.
2. Deduplicate near-identical memories.
3. Convert failed trajectories into negative constraints.
4. Re-rank memories by similarity, success weight, recency, and stage relevance.
5. Compress the selected records into a solver-specific memory packet.

Memory packet injected to Solver Agent 0:

Positive guidance:
Formalize the premises with predicates.
Try to construct a countermodel.
If one model makes all premises true and the conclusion false, answer invalid.

Negative guidance:
Do not treat "not M(x) or not E(x)" as "not B(x)".
Do not use outside facts about cosmetics or ingredients.
\end{redlog}
\end{gmemcontent}

\gmemsection{Memory Reading Process}
\begin{gmemcontent}
\begin{redlog}
CentralMemoryRecord(
  memory_id = "GM-S-017",
  pool_type = "successful_trajectory_pool",
  memory_type = "successful_task_trajectory",
  similarity_to_current_task = 0.88,
  retrieval_weight = 1.42,
  source_framework = "AutoGen",
  source_task_family = "BBH deductive validity",

  task_signature = {
    "input_modality": "natural_language_argument",
    "answer_format": "binary valid/invalid",
    "logic_pattern": "universal conditional with disjunction",
    "recommended_tool": "predicate formalization + countermodel construction"
  },

  successful_trajectory = {
    "L1_solver": {
      "role": "solver",
      "action": "formalize premises",
      "intermediate_result": "identify predicates and map the conclusion to a target formula",
      "quality_score": 9.0
    },
    "L2_ground_truth": {
      "role": "validator",
      "action": "construct countermodel",
      "intermediate_result": "find an assignment where all premises hold and the conclusion fails",
      "quality_score": 10.0
    }
  },

  prompt_injection_text =
  "Use predicate formalization and countermodel search. A single model satisfying the premises
  and falsifying the conclusion is sufficient to prove invalidity."
)

CentralMemoryRecord(
  memory_id = "GM-F-011",
  pool_type = "failed_trajectory_pool",
  memory_type = "failure_case_with_correction",
  similarity_to_current_task = 0.79,
  retrieval_weight = 1.25,

  failed_task_signature = {
    "logic_pattern": "conditional premise containing disjunction",
    "surface_error": "agent treated a weak disjunctive condition as a strong exclusion",
    "observed_wrong_answer": "valid",
    "correct_answer": "invalid"
  },

  prompt_injection_text =
  "Be careful with disjunction. From B(x)->not M(x) or not E(x), E(x) does not imply not B(x),
  because not M(x) may still satisfy the premise."
)

CentralMemoryRecord(
  memory_id = "GM-I-008",
  pool_type = "task_insight_pool",
  memory_type = "reusable_high_level_insight",
  similarity_to_current_task = 0.86,

  insight_statement =
  "Deductive invalidity can be established by one countermodel.",

  insight_scope = [
    "valid/invalid classification",
    "formal logic tasks",
    "natural-language arguments",
    "universal or conditional premises"
  ],

  why_it_matters =
  "The solver does not need to derive the opposite conclusion. It only needs to show that the
  stated premises leave at least one possible world where the conclusion is false.",

  prompt_injection_text =
  "Do not over-prove. If one possible assignment makes the premises true and the conclusion false,
  classify the argument as invalid."
)

CentralMemoryRecord(
  memory_id = "GM-E-024",
  pool_type = "evaluator_feedback_pool",
  memory_type = "evaluator_preference_and_scoring_signal",
  similarity_to_current_task = 0.72,

  evaluator_feedback = {
    "high_score_pattern": "Answer defines predicates, writes the formal premises, and provides a concrete countermodel.",
    "low_score_pattern": "Answer only outputs valid/invalid without explaining whether premises entail the conclusion.",
    "preferred_final_style": "Concise final label after enough validation evidence has been accumulated."
  },

  prompt_injection_text =
  "For the validation stage, include predicate definitions and a concrete countermodel.
  For the final executor stage, compress the result to the output label."
)
\end{redlog}
\end{gmemcontent}

\gmemsection{Memory Reading Process}
\begin{gmemcontent}
\begin{redlog}
GMemoryMergedPacket(
  target_framework = "AutoGen",
  target_stage = "L1",
  target_role = "solver",

  retrieved_records = [
    "successful": ["GM-S-017"],
    "failed": ["GM-F-011"],
    "insights": ["GM-I-008"],
    "evaluator_feedback": ["GM-E-024"]
  ],

  centralized_memory_summary = {
    "task_family": "deductive validity",
    "dominant_success_strategy": "predicate formalization + countermodel search",
    "dominant_failure_mode": "over-strengthening disjunctive premises",
    "answer_format_constraint": "must output valid or invalid"
  },

  positive_strategy = [
    "Map ingredient relations to predicates.",
    "Write the first premise as E(VitaminA).",
    "Write the second premise as forall x, B(x) -> not M(x) or not E(x).",
    "Write the conclusion as not B(VitaminA).",
    "Try the assignment E(VitaminA)=true, B(VitaminA)=true, M(VitaminA)=false."
  ],

  negative_constraints = [
    "Do not use external knowledge about actual cosmetic ingredients.",
    "Do not infer not B(VitaminA) merely from E(VitaminA).",
    "Do not collapse a disjunction into a conjunction or a stronger exclusion."
  ],

  validation_checklist = [
    "Premise 1 is true under the candidate model.",
    "Premise 2 is true because not M(VitaminA) is true.",
    "The conclusion not B(VitaminA) is false.",
    "Therefore, a countermodel exists and the argument is invalid."
  ],

  prompt_injected_to_solver =
  "Central memory suggests that this task is a deductive-validity problem.
  Use only the stated premises. Formalize the predicates and search for a countermodel.
  Be especially careful with the disjunction in the second premise:
  E(VitaminA) being true does not force B(VitaminA) to be false, because not M(VitaminA)
  can still satisfy the premise."
)
\end{redlog}

\vspace{0.5em}
\textbf{Memory packet injected to Solver Agent 0:}

\vspace{0.4em}
\textbf{Positive guidance:}

Formalize the premises with predicates.

Try to construct a countermodel.

If one model makes all premises true and the conclusion false, answer invalid.

\vspace{0.5em}
\textbf{Negative guidance:}

Do not treat ``not M(x) or not E(x)'' as ``not B(x)''.

Do not use outside facts about cosmetics or ingredients.

\vspace{0.7em}
\textbf{[\#1] Solver Agent 0}

I receive the task plus the G-memory packet. I reason explicitly over the premises.

\vspace{0.4em}
\textbf{Output:}

The argument is invalid.

\end{gmemcontent}

\gmemsection{Memory Update Process}
\begin{gmemcontent}
\begin{redlog}
[G-memory Final Update]
Write the completed trajectory back into centralized memory:
task_signature = "deductive validity with universal disjunctive premise"
task_family = "BBH logical deduction"
framework = "AutoGen"
final_label = "invalid"
final_confidence = high

finalized_task_record = {
  "root_question": "Does the conclusion logically follow from the stated premises?",
  "final_answer": "invalid",
  "successful_strategy": "predicate formalization + countermodel construction",
  "dominant_failure_avoided": "collapsing the disjunction into a stronger exclusion",
  "key_countermodel": {
    "E(VitaminA)": true,
    "B(VitaminA)": true,
    "M(VitaminA)": false
  },
  "why_invalid": "All premises can be true while the conclusion is false."
}

finalized_stage_trace = [
  {
    "agent_role": "solver",
    "node_id": "GM-current-L1-agent0",
    "main_contribution": "identified that a countermodel may exist",
    "score": 9.0
  },
  {
    "agent_role": "validator",
    "node_id": "GM-current-L2-agent1",
    "main_contribution": "constructed the explicit symbolic countermodel",
    "score": 10.0
  },
  {
    "agent_role": "executor",
    "node_id": "GM-current-L3-agent2",
    "main_contribution": "compressed the validated reasoning into the final label",
    "score": 9.0
  }
]

extracted_reusable_knowledge = {
  "success_rule_1": "Formalize natural-language relations as predicates before deciding validity.",
  "success_rule_2": "For deductive invalidity, one countermodel is sufficient.",
  "success_rule_3": "When a premise contains a disjunction, test whether the alternative disjunct can still satisfy the premise.",
  "failure_rule_1": "Do not infer not B(a) from B(a)->not M(a) or not E(a) merely because E(a) is true."
}

evaluator_scores = {
  "L1": 9.0,
  "L2": 10.0,
  "L3": 9.0,
  "trajectory_average": 9.33
}

Update centralized pools and indexes:
  successful_trajectory_pool += {
    "memory_id": "GM-S-new-021",
    "source_task_signature": "deductive validity with universal disjunctive premise",
    "stored_trace": "solver -> validator -> executor trajectory",
    "stored_countermodel": "E(VitaminA)=true, B(VitaminA)=true, M(VitaminA)=false",
    "stored_label": "invalid"
  }

  failed_trajectory_pool += {
    "memory_id": "GM-F-new-021",
    "stored_failure_pattern": "incorrectly collapsing a disjunctive conditional",
    "stored_negative_constraint": "one false disjunct does not invalidate the whole premise if the other disjunct remains true",
    "link_to_corrected_case": "GM-S-new-021"
  }

  task_insight_pool += {
    "memory_id": "GM-I-new-021",
    "insight": "For premise B(x)->not M(x) or not E(x), E(a) does not imply not B(a), because not M(a) can satisfy the disjunction.",
    "generalization_scope": "natural-language deductive validity tasks with disjunctive premises"
  }

  evaluator_feedback_pool += {
    "memory_id": "GM-E-new-021",
    "preferred_high_score_pattern": "define predicates, write the formal premise, and provide a concrete countermodel",
    "observed_score_trace": "[L1: 9.0, L2: 10.0, L3: 9.0]"
  }

agent_message_graph += {
  "new_nodes": [
    "task_root_node",
    "GM-current-L1-agent0",
    "GM-current-L2-agent1",
    "GM-current-L3-agent2",
    "countermodel_memory_node",
    "final_label_node"
  ],
  "new_edges": [
    "task_root_node -> GM-current-L1-agent0",
    "GM-current-L1-agent0 -> GM-current-L2-agent1",
    "GM-current-L2-agent1 -> countermodel_memory_node",
    "countermodel_memory_node -> GM-current-L3-agent2",
    "GM-current-L3-agent2 -> final_label_node"
  ]
}

Refresh retrieval weights and summaries:
Increase weight of GM-S-017 because its countermodel strategy directly supported a successful trajectory.
Increase weight of GM-I-008 because the "one countermodel is sufficient" rule was validated again.
Preserve GM-F-011 as a high-value negative constraint because it matches the core failure mode of this task.
Regenerate centralized summary so future solver agents see this task family as predicate-first, countermodel-first, disjunction-sensitive.
\end{redlog}
\end{gmemcontent}

\end{gmemorycasebox}

\section{Prompt Set}


\definecolor{promptOrangeTitle}{HTML}{FCE4C8}
\definecolor{promptOrangeBar}{HTML}{FFF3E6}
\definecolor{promptOrangeBorder}{HTML}{D9822B}
\definecolor{promptCodeBg}{HTML}{FFF9F2}
\definecolor{promptCodeText}{HTML}{7A3E00}

\newtcolorbox{prompttablebox}[1]{
    enhanced,
    breakable,
    colback=white,
    colframe=promptOrangeBorder,
    colbacktitle=promptOrangeTitle,
    coltitle=black,
    fonttitle=\bfseries\large,
    title={#1},
    boxrule=0.8pt,
    arc=2pt,
    left=6pt,
    right=6pt,
    top=6pt,
    bottom=6pt,
    toptitle=6pt,
    bottomtitle=6pt,
    lefttitle=8pt,
    righttitle=8pt,
    before skip=10pt,
    after skip=12pt,
    borderline west={2pt}{0pt}{promptOrangeBorder}
}

\newcommand{\promptrowtitle}[2]{%
    \par\vspace{0.55em}
    \noindent
    \colorbox{promptOrangeBar}{%
        \parbox{\dimexpr\linewidth-2\fboxsep\relax}{%
            \textbf{#1}\hfill{\small #2}%
        }%
    }%
    \par\vspace{0.35em}
}

\newenvironment{promptcode}{%
  \VerbatimEnvironment
  \begingroup
  \renewcommand{\FancyVerbFormatLine}[1]{\textcolor{promptCodeText}{##1}}%
  \begin{Verbatim}[
    fontsize=\scriptsize,
    breaklines=true,
    breakanywhere=true,
    frame=single,
    framesep=4pt,
    rulecolor=\color{promptOrangeBorder},
    fillcolor=\color{promptCodeBg},
    formatcom=\color{promptCodeText}
  ]%
}{%
  \end{Verbatim}%
  \endgroup
}

\begin{prompttablebox}{\ours Prompt}

\promptrowtitle{1. Exploration Pool Prompt}

When the exploration memory pool is selected, no historical memory fragment is reused.
Instead, the agent enters a fresh exploration mode and solves the task through
standard workflow prompts, including role definition, approach decision, optional
problem decomposition, direct problem solving, and solution integration.

\promptrowtitle{1.1 Role-Definition Prompt}{Role selection}

\begin{promptcode}
Given this problem:
{problem_description[:300]}

Select the BEST role from this list to solve this problem:
- Problem Solver: General problem-solving and analysis
- Data Analyst: Analyzing tables, numerical data, patterns
- Reasoning Specialist: Logical deduction and causal reasoning
- Verification Checker: Verifying correctness and validating solutions
- Planning Strategist: Breaking down complex tasks into steps
- Math Solver: Mathematical calculations and equations
- Logic Checker: Checking logical consistency and correctness
- Table Interpreter: Reading and interpreting data tables
- Causal Analyst: Analyzing cause-and-effect relationships
- Navigation Tracker: Tracking positions and movements

Your response MUST be EXACTLY ONE of these role names:
Problem Solver, Data Analyst, Reasoning Specialist, Verification Checker,
Planning Strategist, Math Solver, Logic Checker, Table Interpreter,
Causal Analyst, Navigation Tracker

Only respond with the role name, nothing else.
\end{promptcode}

\promptrowtitle{1.2 Approach-Decision Prompt}{Routing decision}

\begin{promptcode}
Analyze this problem and decide the BEST approach:

Problem: {problem_desc[:200]}...

{if decomposition_score > 0.4:
  "HINT: This problem shows signs of complexity that would benefit from decomposition"}

Decision criteria for 'problem_decomposition':
- Can be broken into 2-4 clear, independent sub-tasks
- Requires different types of expertise or analysis
- Has multiple distinct components or phases
- Examples: system design, multi-step processes, building components

Decision criteria for 'direct_answer':
- Can be solved straightforwardly with a single approach
- Doesn't require breaking into sub-parts
- Simple analysis, calculation, or explanation



Choose the approach that will lead to the BEST solution quality.
Respond ONLY with ONE word: problem_decomposition, direct_answer.
\end{promptcode}

\promptrowtitle{1.3 Direct-Solving Prompt}{Direct answer}

\begin{promptcode}
Solve this problem:
{problem_description}

Provide a clear, direct answer.
\end{promptcode}

\promptrowtitle{1.4 Problem-Decomposition Prompt}{Sub-task generation}

\begin{promptcode}
You are an expert at breaking down complex problems into manageable sub-tasks.

Original Problem:
{problem_desc}

Please decompose this problem into 2-4 DISTINCT and INTERDEPENDENT sub-problems.

Requirements:
1. Each sub-problem should be focused on a specific aspect or step.
2. Sub-problems should be solvable with different expertise levels.
3. Each must contribute to solving the original problem.
4. Ensure the sub-problems are complementary and cover different angles.

For each sub-problem, provide:
- "id": Sequential ID, such as 1, 2, 3.
- "description": Clear, specific description of the sub-problem.
- "focus": Main focus area, e.g., "Analysis", "Design", "Verification".
- "dependencies": Dependencies on other sub-problems, or an empty list.

Respond ONLY with a valid JSON array like this:
[
  {"id": 1, "description": "...", "focus": "...", "dependencies": []},
  {"id": 2, "description": "...", "focus": "...", "dependencies": [1]}
]
\end{promptcode}

\promptrowtitle{2. Exploitation-Pool Prompt with Similarity Matching}{Historical memory reuse}

When the Exploitation-Pool is selected, the framework does not immediately inject
historical memory. Instead, it first applies a similarity-matching mechanism.
The current task description is used as the retrieval query, encoded into an
embedding, and compared against stored memory fragments. Only
fragments whose semantic similarity exceeds a predefined threshold are reused.

\promptrowtitle{2.1 Exploitation-Pool Retrieval Query}{Similarity search}

\begin{promptcode}
Use the current task description as the retrieval key and search the  memory pool for semantically similar historical fragments.

Retrieval query:
{problem_description}

Search scope:
agent_id = {agent_id}
stage_id = {stage_id}
pool_type = "Exploitation-Pool"

Return candidate memory fragments ranked by semantic similarity.
\end{promptcode}

\promptrowtitle{2.2 Similarity Filtering Prompt}{Threshold selection}

\begin{promptcode}
Given the current task and retrieved memory candidates, decide which historical
memory fragments are relevant enough to reuse.

Current task:
{problem_description}

Candidate memories:
{retrieved_memory_candidates}

Similarity threshold:
{similarity_threshold}

Selection criteria:
- Keep fragments whose semantic similarity exceeds the threshold.
- Prefer fragments that match the task type, reasoning pattern, and expected output format.
- Discard fragments that only share surface-level keywords.
- Discard fragments that may introduce misleading assumptions.

Return ONLY the selected memory fragment IDs as a JSON array:
["memory_id_1", "memory_id_2"]
\end{promptcode}

\promptrowtitle{2.3 Exploitation-Pool Prompt}{Memory-conditioned solving}

\begin{promptcode}
You are solving a new task with help from relevant historical memory fragments.

Current task:
{problem_description}

Retrieved Exploitation-Pool fragments:
{selected_Exploitation-Pool_fragments}

Use the retrieved memory only as guidance. Do not copy previous answers directly.
Adapt useful reasoning patterns, checks, or constraints to the current task.

Please solve the current task and provide a clear final answer.
\end{promptcode}

\promptrowtitle{3. Evaluation Prompt}{Stage-level quality scoring}

The evaluation framework scores the quality of execution at the stage level.
It considers both the integrated solutions and the raw direct LLM answers, and
returns structured feedback for subsequent memory updates.

\promptrowtitle{3.1 Evaluation Prompt}{Evaluator instruction}

\begin{promptcode}
You are an expert evaluator. Evaluate the overall quality of work done in this stage.

PROBLEM:
{original_problem}

STAGE: {stage} - {stage_name}
- Number of tasks: {num_nodes}
- Task node paths: {node_paths}
- Agents involved: {agents_involved}
- Action types distribution: {action_types_count}

SOLUTIONS PROVIDED (Final integrated solutions):
{solutions_summary}

DIRECT LLM ANSWERS (Raw LLM responses before processing):
{direct_llm_answers_summary}
\end{promptcode}

\promptrowtitle{3.2 Stage-Specific Criteria for $t_1$}{Initial stage}

\begin{promptcode}
1. Problem Understanding:
Did the agent properly understand the problem?

2. Decomposition Quality:
If decomposed, is the breakdown logical and complete?

3. Solution Clarity:
Are the solutions clear and well-structured?

4. LLM Direct Answer Quality:
Is the LLM's direct response accurate and helpful?

5. Foundation:
Did this stage provide good foundation for next stages?
\end{promptcode}

\promptrowtitle{3.3 Stage-Specific Criteria for $t_2$}{Intermediate stage}

\begin{promptcode}
1. Processing Quality:
How well were intermediate tasks solved?

2. Building on Previous:
Did agents effectively use guidance from stage t_1?

3. Task Allocation:
Were tasks appropriately allocated to capable agents?

4. Coherence:
Do the solutions form a coherent middle layer?

5. LLM Answer Consistency:
Do the LLM direct answers align with the integrated solutions?
\end{promptcode}

\promptrowtitle{3.4 Stage-Specific Criteria for $t_3$}{Final stage}

\begin{promptcode}
1. Refinement Quality:
How well were solutions refined?

2. Integration:
How well do the final solutions integrate all previous work?

3. Completeness:
Is the final solution complete and comprehensive?

4. Excellence:
Does the final work meet high quality standards?

5. LLM Answer Quality:
Are the LLM direct answers comprehensive and accurate?
\end{promptcode}

\promptrowtitle{3.5 Expected Evaluator Output}{Structured feedback}

\begin{promptcode}
{
  "score": <0-10>,
  "stage_quality": "<poor/fair/good/excellent>",
  "reasoning": "<detailed explanation>",
  "solution_quality": "<assessment of the integrated solutions>",
  "llm_answer_quality": "<assessment of the LLM direct answers>",
  "strengths": "<what went well>",
  "weaknesses": "<what could be improved>",
  "agent_coordination": "<how well agents worked together>"
}
\end{promptcode}

\end{prompttablebox}


\newpage

\end{document}